


\documentclass[long]{llncs}
\usepackage{amssymb,amsmath,}
\usepackage{comment}
\usepackage{temporal}
\usepackage{gastex}
\usepackage{xspace}
\usepackage{color}
\usepackage[normalem]{ulem} 

\newcommand{\pair}[1]{\tuple{#1}}

\unitlength=.8mm
\def\fsize{\footnotesize}
\def\loop{6}
\def\Nw{7}
\def\Nh{7}
\def\Nmr{7}

 \sloppy


\makeatletter
\renewcommand\paragraph{\@startsection{paragraph}{4}{\z@}%
                       {-3\p@ \@plus -4\p@ \@minus -4\p@}%
                       {-0.5em \@plus -0.22em \@minus -0.1em}%
                       {\normalsize\bfseries}}
\renewcommand\section{\@startsection{section}{1}{\z@}%
                       {-8\p@ \@plus -4\p@ \@minus -4\p@}%
                       {4\p@ \@plus 4\p@ \@minus 4\p@}%
                       {\normalfont\large\bfseries\boldmath
                        \rightskip=\z@ \@plus 8em\pretolerance=10000
                        }}
\renewcommand\subsection{\@startsection{subsection}{2}{\z@}%
                       {-8\p@ \@plus -2\p@ \@minus -2\p@}%
                       {6\p@ \@plus 2\p@ \@minus 2\p@}%
                       {\normalfont\normalsize\bfseries\boldmath
                        \rightskip=\z@ \@plus 8em\pretolerance=10000
                      }}
\makeatother

\spnewtheorem{Example}[example]{Example}{\bf}{\vspace{-1mm}}
\spnewtheorem{Lemma}[lemma]{Lemma}{\bf}{\vspace{-2mm}}

\sloppy

\definecolor{grey}{rgb}{0.5,0.5,0.5}

\renewcommand{\epsilon}{\varepsilon}
\newcommand{\Eproj}{\bar{E}}
\newcommand{\lang}[1]{{L_{#1}}} 
\newcommand{\qual}{L}  
\newcommand{\quan}{L}  
\newcommand{\minimum}{\vec{-1}}

\newcommand{\I}{{\sf I}}
\renewcommand{\O}{{\sf O}}
\newcommand{\AP}{{\sf AP}}

\newcommand{\maps}{\rightarrow}
\renewcommand{\implies}{\maps} 
\newcommand{\real}{\mathbb{R}}

\newcommand{\nat}{\mathbb{N}}

\newcommand{\tuple}[1]{\langle#1\rangle}
\newcommand{\set}[1]{\{#1\}}
\newcommand{\vc}{\vec{c}}
\newcommand{\vctr}[1]{\tuple{#1}}
\def\init{s_0}
\def\vinit{v_0}

\def\strat{\pi}
\def\Strat{\Pi}

\def\p{p} 

\def\Value{{\cal V}}


\def\reward{r}
\def\lreward{{\vec{r}}}
\def\priority{p}
\def\maxpriority{{|\priority|}}
\def\maps{\rightarrow}
\def\game{{\cal G}}
\def\cI{{\cal I}}
\def\cO{{\cal O}}
\def\A{{\Sigma}} 
\def\IA{{\A_I}}
\def\OA{{\A_O}}
\def\M{M}
\def\dim{d}

\def\grant{g}
\def\req{r}
\def\g{g}
\def\r{r}
\def\ng{{\bar{g}}}
\def\nr{{\bar{r}}}

\newcommand{\pat}{\rho} 
\newcommand{\Pat}{\Omega} 
\newcommand{\LM}{\mathit{LM}}
\newcommand{\MP}{\mathit{MP}}
\newcommand{\LMP}{\mathit{LMP}}
\def\ve{\varepsilon}
\renewcommand{\phi}{\varphi}

\newcommand{\wh}{\widehat}


\title{\large Better Quality in Synthesis through Quantitative
  Objectives
}
\author{Roderick Bloem\inst{1} \and Krishnendu Chatterjee\inst{2} \and\\
  Thomas A. Henzinger\inst{3} \and Barbara Jobstmann\inst{3}}
\institute{${}^1$Graz University of Technology,\ \ ${}^2$IST, Austria, \ \ ${}^3$EPFL}

\begin{document}

\maketitle
\vspace{-15pt}
\begin{abstract}
  Most specification languages express only qualitative constraints.
  However, among two implementations that satisfy a given
  specification, one may be preferred to another.  For example, if a
  specification asks that every request is followed by a response, one
  may prefer an implementation that generates responses quickly but
  does not generate unnecessary responses.  We use quantitative
  properties to measure the ``goodness'' of an implementation.  Using
  games with corresponding quantitative objectives, we can
  synthesize ``optimal'' implementations, which are preferred among
  the set of possible implementations that satisfy a given
  specification.

  In particular, we show how automata with lexicographic mean-payoff
  conditions can be used to express many interesting quantitative
  properties for reactive systems.  In this framework, the synthesis
  of optimal implementations requires the solution of lexicographic
  mean-payoff games (for safety requirements), and the solution of
  games with both lexicographic mean-payoff and parity objectives (for
  liveness requirements).  We present algorithms for
  solving both kinds of novel graph games.
\end{abstract}


\section{Introduction}
\label{sec:introduction}

Traditional specifications are Boolean: 
an implementation satisfies a specification, or it does not.
This Manichean view is not entirely satisfactory:
There are usually many different ways to satisfy a specification, 
and we may prefer one implementation over another.
This is especially important when we automatically synthesize
implementations from a specification, because we have no other way to
enforce these preferences.
In this paper, we add a quantitative aspect to system specification,
imposing a preference order on the implementations that 
satisfy the qualitative part of the specification.  
Then, we present synthesis algorithms that construct, from a given 
specification with both qualitative and quantitative aspects, an 
implementation that 
(i)~satisfies the qualitative aspect and 
(ii)~is optimal or near-optimal with respect to the quantitative aspect.
Along the way, we introduce and solve graph games with new kinds of 
objectives, namely, lexicographic mean-payoff objectives and the
combination of parity and lexicographic mean-payoff objectives.

Suppose we want to specify an arbiter for a shared resource.  For each
client $i$, the arbiter has an input $r_i$ (request access) and an
output $g_i$ (access granted).  A first attempt at a specification in
LTL may be $\bigwedge_i \always(r_i \rightarrow \eventually g_i) \,
\wedge\, \always \bigwedge_i \bigwedge_{j \neq i} (\neg g_i \vee \neg
g_j)$.  (All requests are granted eventually and two grants never
occur simultaneously.)  This specification is too weak: An
implementation that raises all $g_i$ signals in a round-robin fashion
satisfies the specification but is probably undesired.  The
unwanted behaviors can be ruled out by adding the requirements
$\bigwedge_i \always(g_i \rightarrow \nextt(\neg g_i \weakuntil
r_i))\, \wedge\, \bigwedge_i \neg g_i \weakuntil r_i$. (No second
grant before a request.)

Such Boolean requirements to rule out trivial but undesirable 
implementations have several drawbacks:
(i)~they are easy to forget and difficult to get right 
(often leading to unrealizable specifications) and, perhaps more 
importantly, 
(ii)~they constrain implementations unnecessarily, by giving up the 
abstract quality of a clean specification.
In our example, we would rather say that the implementation should 
produce ``as few unnecessary grants as possible''
(where a grant $g_i$ is unnecessary if there is no outstanding 
request~$r_i$).
We will add a quantitative aspect to specifications which allows us 
to say that.
Specifically, we will assign a real-valued reward to each behavior, and 
the more unnecessary grants, the lower the reward.

A second reason that the arbiter specification may give rise to
undesirable implementations is that it may wait arbitrarily long
before producing a grant.  Requiring that grants come within a fixed
number of steps instead of ``eventually'' is not robust, because it
depends on the step size of the implementation and the number of
clients.  Rather, we assign a lower reward to executions with larger
distances between a request and corresponding grant.  If we use
rewards both for punishing unnecessary grants and for punishing late
grants, then these two rewards need to be combined.  This leads us to
consider tuples of costs that are ordered lexicographically.  We
define the quantitative aspect of a specification using
\emph{lexicographic mean-payoff automata}, which assign a tuple of
costs to each transition.  The cost of an infinite run is obtained by
taking, for each component of the tuple, the long-run average of all
transition costs.  Such automata can be used to specify both ``produce
as few unnecessary grants as possible'' and ``produce grants as
quickly as possible,'' and combinations thereof.

If the qualitative aspect of the specification is a safety property,
then synthesis requires the solution of \emph{lexicographic
  mean-payoff games}, for which we can synthesize optimal solutions.
(The objective is to minimize the cost of an infinite run lexicographically.)
If the qualitative aspect is a liveness property, then we obtain
\emph{lexicographic mean-payoff parity games}, which must additionally
satisfy a parity objective.  We 
present the solution of these games in this paper.  We
show that lexicographic mean-payoff games are determined for
memoryless strategies and can be decided in NP $\cap$ coNP, but that
in general optimal strategies for lexicographic mean-payoff parity
games require infinite memory.  
We
prove, however, that for any given real vector
$\vec{\varepsilon}>\vec{0}$, there exists a finite-state strategy that
ensures a value within $\vec{\varepsilon}$ of the optimal value.  This
allows us to synthesize $\varepsilon$-optimal implementations, for any
$\vec{\varepsilon}$.  
The complexity class of the optimal synthesis problem is NP.

\smallskip\noindent{\em Related work.}  There are several formalisms
for quantitative specifications in the
literature~\cite{Alur09,CCHKM05,CAHS03,CAFHMS06,Chatte08a,dA98,DiscountingTheFuture,DrosteGastin07,DrosteKR08,KL07};
most of these works (other than~\cite{Alur09,Chatte08a,dA98}) do not
consider mean-payoff specifications and none of these works focus on
how quantitative specifications can be used to obtain better
implementations for the synthesis problem.
Several notions of metrics
have been proposed in the literature for probabilistic systems and
games~\cite{dAMRS07,DGJP99}; these metrics provide a measure that
indicates how close are two systems with respect to all temporal
properties expressible in a logic; whereas our work compares how good
an implementation is with respect to a given specification.  The
work~\cite{CHJ04} considers non-zero-sum games with lexicographic
ordering on the payoff profiles, but to the best of our knowledge, the
lexicographic quantitative objective we consider for games has not
been studied before.




\section{Examples}
\label{sec:qualityToQuantity}

After giving necessary definitions, we illustrate with several
examples how quantitative constraints can be a useful measure for the
quality of an implementation.

\label{sec:definitions}

\paragraph{Alphabets, vectors, and lexicographic order.}
Let $\cI$ and $\cO$ be finite sets of \emph{input} and \emph{output
  signals}, respectively.  We define the \emph{input alphabet} $\IA =
2^{\cI}$ and the \emph{output alphabet} $\OA=2^{\cO}$. The joint
alphabet $\A$ is defined as $\A = 2^{\cI \cup \cO}$.
Let $\real^d$ be the set of real vectors of dimension $\dim$ with the
usual lexicographic order.  

\paragraph{Mealy machines.}
A \emph{Mealy machine} is a tuple $M = \tuple{Q,q_0,\delta}$, where
$Q$ is a finite set
of states, $q_0 \in Q$ is
the initial state, and $\delta \subseteq Q \times \IA \times \OA \times Q$ is a
set of labeled edges.  We require that the machine is \emph{input enabled} and
\emph{deterministic}:
$\forall q \in Q\scope \forall i \in \IA$, there exists a
unique $ o \in \OA$ and a unique $ q' \in
Q$ such that $\tuple{q,i,o,q'} \in \delta$.
Each input word $i = i_0 i_1 \dots \in \IA^{\omega}$ has a unique
\emph{run} $q_0 i_0 o_0 q_1 i_1 o_1 \dots$ such that $\forall k\ge 0
\scope \tuple{q_k, i_k, o_k, q_{k+1}} \in \delta$.  The corresponding
\emph{I/O word} is $i_0 \cup o_0, i_1 \cup o_1,\dots \in \A^{\omega}$.
The \emph{language} of $M$, denoted by $\lang{M}$, is the set of all
I/O words of the machine.  Given a language $\qual \subseteq
\Sigma^{\omega}$, we say a Mealy machine $M$ \emph{implements} $\qual$
if $\lang{M}\subseteq \qual$.

\paragraph{Quantitative languages.}  A \emph{quantitative language
  \cite{Chatte08a} over $\Sigma$} is a function $\quan: \A^\omega
\maps V$ that associates to each word in $\A^\omega$ a \emph{value}
from~$V$, where $V\subset \real^d$ has a
least element.
Words with a higher value are more desirable than those with a lower
value.  In the remainder, we view an ordinary, \emph{qualitative}
language as a quantitative language that maps words in $L$ to $\true$
($=1$) and words not in $L$ to $\false$ (= 0).
We often use a pair $\pair{\qual,\quan'}$ of a qualitative language
$\qual$ and a quantitative language $\quan':\A^\omega \maps V$ as
specification, where $\qual$ has higher
priority than $\quan'$.  We can also view $\pair{\qual,\quan'}$ as
quantitative language with $\pair{\qual,\quan'}(w)= \vec{0}$ if
$\qual(w) = 0$, and
$\pair{\qual,\quan'}(w)=\quan'(w)-v_{\bot}+\vec{1}$ otherwise, where
$v_{\bot}$ is the minimal value in $V$.  (Adding constant factors does
not change the order between words).

We  extend the definition of value to Mealy machines.  As
in verification and synthesis of qualitative languages, we take the
worst-case behavior of the Mealy machine as a measure.
Given a quantitative language $\quan$ over $\A$, the \emph{value} of a
Mealy machine $M$, denoted by $\quan(M)$, is $\inf_{w \in \lang{M}}
\quan(w)$.

\paragraph{Lexicographic mean-payoff automata.}
We use lexicographic mean-payoff automata to describe quantitative
languages.  In lexicographic mean-payoff automata each edge is mapped
to a reward.  The automaton associates a run with a word and assigns
to the word the average reward of the edges taken (as in mean-payoff
games \cite{Ehrenf79}).  Unlike in mean-payoff games, rewards are
vectors.

Formally, a \emph{lexicographic mean-payoff automaton} of dimension
$\dim$ over $\A$ is a tuple $A=\tuple{\tuple{S, \init, E}, \lreward}$,
where $S$ is a set of states, $E \subseteq S \times \Sigma \times S$
is a labeled set of edges, $\init\in S$ is the initial state, and
$\lreward: E \maps \nat^\dim$ is a reward function that maps edges to
$\dim$-vectors of natural numbers.  Note that all rewards are
non-negative.  We assume that the automaton is complete and deterministic:
for each $s$ and $\sigma$ there is exactly one $s'$ such that
$\tuple{s,\sigma,s'}\in E$.
A word $w = w_0 w_1 \dots \in \A^{\omega}$ has a unique run $\rho(w) =
s_0 e_0 s_1 e_1 \dots$ such that $s_i \in S$ and
$e_i=\tuple{s_i,w_i,s_{i+1}} \in E$  for all $i\ge 0$.
The \emph{lexicographic mean
  payoff} $\LM(\rho)$ of a run $\rho$ is defined as $LM(\rho)=\lim\inf_{n
  \rightarrow \infty} \frac{1}{n}\sum_{i=0}^{n} \lreward(e_i)$.
The automaton defines a quantitative language with domain $\real^\dim$
by associating to every word $w$ the value $\quan_A(w) =
\LM(\rho(w))$.

If the dimension of $A$ is 1 and the range of $\quan_A$ is $\{0,1\}$
then, per definition, $\lang{A}$ defines a qualitative language.  We
say that $A$ is a \emph{safety automaton} if it defines a qualitative
language and there is no path from an edge with reward~$0$ to an edge
with reward $>0$.  Safety automata define safety languages
\cite{Alpern85}.  Note that in general, $\omega$-regular languages and
languages expressible with mean-payoff automata are incomparable
\cite{Chatte08a}.

\label{sec:examples}
\begin{Example}\label{ex:intro}
  Let us consider a specification of an arbiter with one client.  In
  the following, we use $\r$, $\nr$, $\g$, and $\ng$ to represent that
  $\req$ or $\grant$ are set to $\true$ and $\false$, respectively and
  $\top$  to indicate that a signal can take either value.  A slash
  separates~input~and~output.

  Take the specification $\phi=\always(\req \implies \grant \vee
  \nextt\grant)$: every request is granted within two steps.  The
  corresponding language~$\lang{\phi}$ maps a word $w=w_0w_1,\dots$ to
  $\true$ iff for every position $i$ in $w$, if $\req\in w_i$, then
  $\grant \in w_{i} \cup w_{i+1}$.
  \begin{figure}[tb]
    \begin{minipage}{0.5\textwidth}
    \centering
      \def\links{-5}
\def\rechts{65}
\def\oben{33}
\def\unten{5}

\begin{picture}(\rechts,\oben)(\links,\unten)

{\fsize
\gasset{loopdiam=\loop,Nw=\Nw,Nh=\Nh,Nmr=\Nmr}

\node[Nmarks=i](q0)(0,15){}  
\node[Nmarks=i](q1)(20,15){} 

\node[Nmarks=i](q3)(40,15){} 
\node[Nmarks=n](q4)(60,15){}  

\put(-3,5){$\M_1$}
\put(17,5){$\M_2$}
\put(37,5){$\M_3$}

\drawloop[ELside=l, ELdist=1](q0){$\top/g$}

\drawloop[ELside=l, ELdist=1](q1){$\begin{array}{c}\nr/\ng\\r/g \end{array}$}



\drawloop[ELside=l, ELdist=1](q3){$\nr/\ng$}
\drawedge[ELpos=50, ELside=l, ELdist=1, curvedepth=6](q3,q4){$\r/\ng$}
\drawedge[ELpos=50, ELside=l, ELdist=1, curvedepth=6](q4,q3){$\top/\g$}

}
\end{picture}

      \caption{Three Mealy machines that implement $\always(\req \implies
        \grant\vee \nextt\grant)$}
    \label{fig:systems}
    \end{minipage}
    \hspace{0.04\textwidth}
    \begin{minipage}{0.45\textwidth}
      \centerline{\def\rechts{75}
\def\oben{33}
\def\links{-5}
\def\unten{5}

\begin{picture}(\rechts,\oben)(\links,\unten)
{\fsize
\put(0,20){$A_1$}
\gasset{loopdiam=\loop,Nw=\Nw,Nh=\Nh,Nmr=\Nmr}

\node[Nmarks=i](q0)(10,15){$q_0$}
\drawloop[ELside=l, ELdist=1](q0){$\ng(1)$}
\drawloop[ELside=l, ELdist=1, loopangle=270](q0){$\g(0)$}

\put(25,20){$A_2$}
\node[Nmarks=i](q0)(35,15){$q_0$}
\node[Nmarks=n](q1)(55,15){$q_1$}

\drawloop[ELside=l, ELdist=1](q0){$\begin{array}{c}\nr\ng(1)\\ \r\g(1)\end{array}$}
\drawloop[ELside=l, ELdist=1, loopangle=270](q0){$\nr\g(0)$}
\drawloop[ELside=l, ELdist=1](q1){$\ng(1)$}

\drawedge[ELpos=50, ELside=l, ELdist=1, curvedepth=6](q0,q1){$\r\ng(1)$}
\drawedge[ELpos=50, ELside=l, ELdist=1, curvedepth=6](q1,q0){$\g(1)$}

}

\end{picture}

      \caption{Two specifications that provide different ways of charging for
        grants.}
      \label{fig:unnecessary}
    \end{minipage}
  \end{figure}
  Fig.~\ref{fig:systems} shows three implementations for
  $\lang{\phi}$.  Machine~$\M_1$ asserts~$\grant$ continuously
  independent of $\req$, $\M_2$ responds to each request with a grant
  but keeps~$\grant$ low otherwise, and $\M_3$ delays its response if
  possible.

  We  use a quantitative specification to state that we prefer an
  implementation that avoids unnecessary grants.
  Fig.~\ref{fig:unnecessary} shows two mean-payoff automata, $A_1$
  and $A_2$ that define rewards for the behavior of an
  implementation.
  Note that we have summarized edges using Boolean algebra.
  For instance, an arc labeled $\g$ in the figure corresponds to the edges
  labeled $\r\g$ and $\nr\g$.
  Automata~$A_1$ and $A_2$ define quantitative languages that
  distinguish words by the frequency of grants and the condition under
  which they appear.  Specification~$A_1$ defines a reward of~1 except
  when a grant is given; $A_2$ only withholds a reward when a grant is
  given unnecessarily.
  Consider the  words $w_1=(\r\g, \nr\ng)^\omega$ and
  $w_2=(\r\ng, \nr\g, \nr\g)^\omega$.  Specification~$A_1$ defines the
  rewards $\quan_{A_1}(w_1) = 1/2$, 
  and $\quan_{A_1}(w_2)=1/3$.  For $A_2$, we get
  $\quan_{A_2}(w_1)=1$ 
  and  $\quan_{A_2}(w_2)=2/3$.
  Both specifications are meaningful but they express different
  preferences, which leads to different results for verification and
  synthesis, as discussed in Section~\ref{sec:verification-synthesis}.


  Recall the three implementations in Fig.~\ref{fig:systems}.  Each
  of them implements $\lang{\phi}$.
  For $A_1$, input $\r^{\omega}$ gives the lowest reward.  The values
  corresponding to the input/output word of $\M_1$, $\M_2$, and $\M_3$
  are $0$, $0$, and $1/2$, respectively.  
  Thus, $A_1$ prefers the last implementation.  The 
  values of the implementations for $A_2$ are minimal when the input
  is $\nr^{\omega}$; they are $0$, $1$, and $1$, respectively. 
  Thus, $A_2$ prefers the last two
  implementations, but does not distinguish between them.
\end{Example}

\begin{Example}\label{ex:tuple}
  Assume we want to
  specify an arbiter for two clients that answers requests within
  three steps.  Simultaneous grants
  are forbidden. Formally, we have
  $\phi = \bigwedge_{i\in\set{1,2}} \always\bigl(\req_i \implies
  \bigvee_{t\in\set{0,1,2}} \nextt^t\grant_i\bigr) \wedge
  \always( \neg g_1 \vee \neg g_2).$
  We want grants to come as quickly as possible.
  Fig.~\ref{fig:distance} shows a mean-payoff automaton $A_3$ that
  rewards rapid replies to Client~1.  Suppose we want to do the same
  for Client~2.  One option is to construct a similar automaton~$A'_3$
  for Client~2 and to add the two resulting quantitative languages.
  This results in a quantitative language $\quan_{A_3} + \quan_{A'_3}$ that treats
  the clients equally.  Suppose instead that we want to give
  Client~1 priority.  In that case, we can construct a lexicographic
  mean-payoff automaton that maps a word $w$ to a tuple
  $\tuple{s_1(w),s_2(w)}$, where the first and second elements
  correspond to the payoff for Clients~1 and~2, resp.  Part of this
  automaton, $A_{4}$, is shown in  Fig.~\ref{fig:distance}.
  
  \begin{figure}[tb]
    \centerline{\def\rechts{85}
\def\oben{45}
\def\links{-12}
\def\unten{18}

\begin{picture}(\rechts,\oben)(\links,\unten)
{\fsize
\gasset{loopdiam=\loop,Nw=\Nw,Nh=\Nh,Nmr=\Nmr}

\put(-10,15){$A_3$}
\node[Nmarks=i](q0)(0,35) {$q_0$}
\node[Nmarks=n](q1)(0,20){$q_1$}

\drawloop[ELside=l, ELdist=1,loopangle=90](q0){
  $\begin{array}{c}
    \nr_1(1)\\
    \g_1(1)
    \end{array}$}
\drawloop[ELside=l, ELdist=1,loopangle=0](q1){$\ng_1(0)$}

\drawedge[ELpos=50, ELside=r, ELdist=1, curvedepth=0](q0,q1){$\r_1\ng_1(0)$}
\drawedge[ELpos=50, ELside=r, ELdist=1, curvedepth=-3](q1,q0){$\g_1(1)$}

\put(40,15){$A_{4}$}
\node[Nmarks=i](q10)(50,35) {$q'_{00}$}
\node[Nmarks=n, dash={1}0](q11)(50,20){$q'_{10}$}
\node[Nmarks=n, dash={1}0](q12)(80,35){$q'_{01}$}
\node[Nmarks=n, dash={1}0](q13)(80,20){$q'_{11}$}

\drawloop[ELside=l,ELdist=1,loopangle=90](q10){
  $\begin{array}{c}
    \nr_1\nr_2 \begin{tiny}\vctr{1,1}\end{tiny}\\
    \nr_1\g_2 \begin{tiny}\vctr{1,1}\end{tiny}\\
    \g_1\nr_2 \begin{tiny}\vctr{1,1}\end{tiny}\\
    \g_1\g_2 \begin{tiny}\vctr{1,1}\end{tiny}\\
  \end{array}$}

\drawedge[ELpos=50, ELside=r, ELdist=1, curvedepth=0](q10,q11){
  $\begin{array}{c}
    \r_1\ng_1\g_2 \begin{tiny}\vctr{0,1}\end{tiny}\\
    \r_1\ng_1\nr_2 \begin{tiny}\vctr{0,1}\end{tiny}\end{array}$}
\drawedge[ELpos=50, ELside=l, ELdist=1, curvedepth=0](q10,q12){
  $\begin{array}{c}
    \g_1\r_2\ng_2 \begin{tiny}\vctr{1,0}\end{tiny}\\
    \nr_1\r_2\ng_2 \begin{tiny}\vctr{1,0}\end{tiny}\end{array}$}
\drawedge[ELpos=50, ELside=l, ELdist=1, curvedepth=0](q10,q13){$\dots$}



}
\end{picture}

    \caption{A specification that rewards quick grants for a request
      from Client~1, and a specification that rewards quick grants
      for both clients, while giving priority to Client~1.}
    \label{fig:distance}
  \end{figure}
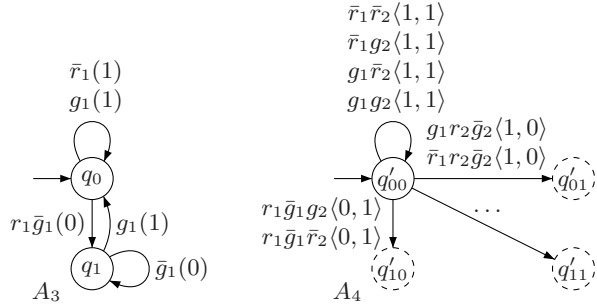
  Automaton $A_3$ distinguishes words with respect to the maximal
  average distance between request and grant.  For instance,
  $\quan_{A_3}((\r_1\g_1, \nr_1\ng_1)^\omega) = 1$ and $\quan_{A_3}((\r_1\ng_1,
  \nr_1\g_1)^\omega) = 1/2$.
%
  Automaton~$A_4$ associates a vector to every word.  For
  instance, 
  $\quan_{A_4}((\r_1\g_1\r_2\ng_2, \nr_1\ng_1\nr_2\g_2)^{\omega}) = 
  1/2\cdot(\vctr{1,0} + \vctr{1,1}) = \vctr{1,1/2}$, which makes
  it preferable to the word $(\r_1\ng_1\r_2\g_2,
  \nr_1\g_1\nr_2\ng_2)^{\omega}$, which has value $\vctr{1/2,1}$.
  This is what we expect: the first word gives priority to requests
  from Client~1, while the second gives priority to Client~2.
\end{Example}
\sloppy
\begin{Example}\label{ex:limitrealizable}
  Let us consider the liveness specification $\phi = \always(\r
  \rightarrow \eventually \g)$ saying that every request must be
  granted eventually.  
  This languages can usefully be combined with $A_3$, stating that
  grants must come quickly.  It can also be combined with $A_1$ from
  Fig.~\ref{fig:unnecessary} stating that grants should occur as
  infrequently as possible.  A Mealy machine may emit a grant every
  $k$ ticks, which gives a reward of $1 - 1/k$.  Thus, there is an
  infinite chain of ever-better machines.  There is no Mealy machine,
  however, that obtains the limit reward of 1.  This can only be
  achieved by an implementation with infinite memory, for instance one
  that answers requests only in cycle~$2^k$ for
  all~$k$~\cite{Chatte05}.
\end{Example}

\section{Lexicographic Mean-Payoff (Parity) Games}
\label{sec:games}

We show how to solve lexicographic mean-payoff games and lexicographic
mean-payoff parity games, which we will need in
Section~\ref{sec:verification-synthesis} to solve the synthesis
problem for quantitative specifications.

\subsection{Notation and known results}
\label{sec:preliminaries}


\paragraph{Game graphs and plays.}
A \emph{game graph} over the alphabet $\A$ is a tuple $G=\tuple{S,
  \init, E}$ consisting of a finite set of states $S$, partitioned into
$S_1$ and $S_2$, representing the states of Player~1 and
Player~2, an initial state $\init \in S$, and a finite set
of labeled edges $E \subseteq S \times \Sigma \times S$.  We require
that the labeling is deterministic, i.e., if $\tuple{s,\sigma,t},
\tuple{s,\sigma,t'} \in E$, then $t=t'$.  We write $\Eproj =
\{\tuple{s,t} \mid \exists \sigma \in \Sigma: \tuple{s, \sigma, t} \in
E\}$.  At $S_1$
states, Player~1 decides the successor state and at $S_2$ states,
Player~2 decides the successor states.  We assume that
$\forall s \in S\scope \exists t \in S \scope \tuple{s,t} \in \Eproj$.
A \emph{play} $\pat= \pat_0 \pat_1 \dots \in S^\omega$ is an infinite
sequence of states such that for all $i\ge 0$ we have
$\tuple{\pat_i, \pat_{i+1}} \in \Eproj$.  We denote the set of all
plays by $\Pat$.


The labels and the initial state are not relevant in this section.
They are used later to establish the connection between
specifications, games, and Mealy machines.  They also allow us to view
automata as games with a single player.


\paragraph{Strategies.}
Given a game graph $G=\tuple{S, \init, E}$, a \emph{strategy for Player~1} is
a function $\strat_1: S^*S_1\maps S$ 
such that
$\forall s_0\dots s_i \in S^*S_1$ we have $\tuple{s_i,\strat_1(s_0s_1\dots
  s_i)}\in \Eproj$.  
A Player-2 strategy is defined similarly.  
We denote the set of all Player-$p$ strategies by
$\Strat_p$.  
The \emph{outcome} $\pat(\strat_1,\strat_2,s)$ of $\strat_1$ and
$\strat_2$ on $G$ starting at $s$ is the unique play
$\pat=\pat_0\pat_1\dots$ such that for all $i\ge 0$, if
$\pat_i\in{S_p}$, then $\pat_{i+1}=\strat_p(\pat_0\dots\pat_i)$ and
$\pat_0=s$.

A strategy 
 $\strat_\p \in\Strat_\p$  is memoryless if for any two sequences
$\sigma=s_0\dots s_i\in S^*S_\p$ and $\sigma'=s'_0\dots s'_{i'} \in
S^*S_\p$ such that $s_i=s'_{i'}$, we have
$\strat_\p(\sigma)=\strat_\p(\sigma')$.  We represent a memoryless
strategy $\strat_\p$ simply as a function from $S_\p$ to $S$.
A strategy is a \emph{finite-memory strategy} if it needs only finite
memory of the past, consisting of a
finite-state machine that keeps track of the history of the play.  The
strategy chooses a move depending on the state of the machine and the
location in the game.
Strategies that are not finite-memory are called
\emph{infinite-memory strategies}.

\paragraph{Quantitative and qualitative objectives.}  We
consider different objectives for the players.  A \emph{quantitative}
objective $f$ is a function $f: \Pat \to \real^\dim$ that assigns a
vector of reals as reward to every play.  We consider complementary
objectives for the two players; i.e., if the objective for Player~1 is
$f$, then the objective for Player~2 is $-f$.  The goal of each player
is to maximize her objective.  Note that Player~2 tries to minimize
$f$ by maximizing the complementary $-f$.
An objective $f:\Pat \to \set{0,\pm 1}$ that maps to the set $\set{0,
  1}$ (or $\set{0, -1}$) is per definition a \emph{qualitative
  objective}.  Given a qualitative objective $f:\Pat \to V$ we say a
play $\pat \in \Pat$ is \emph{winning for Player~1} if $f(\pat)=\max(V)$
holds, otherwise the play is \emph{winning for Player~2}.

\paragraph{Value.}  Given an objective $f$, the
\emph{Player-1 value of a state $s$ for a strategy $\strat_1$} is the
minimal value Player~1 achieves using $\strat_1$ against all Player-2
strategies, i.e.,
$
\Value_1(f, s,\strat_1) = \inf_{\strat_2 \in \Strat_2}
f(\pat(\strat_1,\strat_2,s)).
$
The \emph{Player-1 value of a state $s$} is the maximal value Player-1
can ensure from state $s$, i.e., 
$
\Value_1(f,s) = \sup_{\strat_1\in \Strat_1} \Value_1(f, s,\strat_1).
$
Player-2 values are defined analogously.  If $\Value_1(f, s) +
\Value_2(-f,s) = 0$ for all $s$, then the game is \emph{determined}
and we call $\Value_1(f, s)$ the \emph{value of $s$.}

\paragraph{Optimal, $\epsilon$-optimal, and winning
  strategies.}
Given an objective $f$ and a vector $\vec{\epsilon}\ge\vec{0}$, a
Player-1 strategy $\strat_1$ is \emph{Player-1
  $\vec{\epsilon}$-optimal from a state $s$} if 
$\Value_1(f, s,\strat_1) \ge \Value_1(f, s)-\vec{\epsilon}$.  If~$\strat_1$ is
$\vec{0}$-optimal from $s$, then we call $\strat_1$ 
\emph{optimal from $s$}.
Optimality for
Player-2 strategies is defined analogously.
If $f:\Pat \to V$ is a qualitative objective, a strategy $\strat_1$
is \emph{winning for Player~1} from $s$ if $ \Value_1(f, s,\strat_1) =
\max(V)$.


We now define various objectives.

\paragraph{Parity objectives.}
A \emph{parity objective} consists of a \emph{priority function}
$\priority: S \maps \set{0,1,\dots,k}$ that maps every state in $S$ to
a number (called \emph{priority}) between~$0$ and~$k$.  We denote by
$\maxpriority$ the maximal priority (i.e., $\maxpriority=k$).
The objective function $P$ of Player~1 maps a play $\pat$ to $1$ if
the smallest priority visited infinitely often is even, otherwise
$\pat$ is mapped to $0$.


\paragraph{Lexicographic mean-payoff objectives.}
A \emph{lexicographic mean-payoff objective} consists of a
\emph{reward function} $\lreward: E \maps \nat^\dim$ that maps every
edge in $G$ to a $\dim$-vector (called \emph{reward}) of natural
numbers.
We define  
$|\lreward|= \prod_{1 \leq i \leq \dim} \max_{e
  \in E} \reward_i(e)$, where $\reward_i(e)$ is the $i$-component of
$\lreward(e)$.  
The objective function of Player~1 for a play $\pat$ is the
lexicographic mean payoff
$\LM_\lreward(\pat) = \lim\inf_{n\maps\infty}
\frac{1}{n} \sum_{i=0}^{n} \lreward(\tuple{\pat_i,\pat_{i+1}}).
$
If $\dim=1$, then $\LM_\lreward(\rho)$ is the \emph{mean payoff}
\cite{Ehrenf79} and we refer to it as~$M_\lreward(\rho)$.

\paragraph{Lexicographic mean-payoff parity
  objectives.}  A \emph{lexicographic mean-payoff parity objective}
has a priority function $\priority: S \maps \set{0,1,\dots,k}$ and a
reward function $\lreward: E \maps \nat^\dim$.  The \emph{
  lexicographic mean-payoff parity value} $\LMP_\lreward(\pat)$ for
Player~1 of a play $\pat$ is the lexicographic mean-payoff
$\LM_{\lreward}(\pat)$ if $\pat$ is winning for the parity objective
(i.e., $P_\priority(\pat)=1$), else the payoff is $\minimum$.
If $\dim=1$, then $\LMP_{\lreward,\priority}(\pat)$ defines the
\emph{mean-payoff parity value} \cite{Chatte05} and we write
$\MP_{\lreward,\priority}(\pat)$.  If $\priority$ or $\lreward$ are
clear from the context, we omit them.

\paragraph{Games and automata.} A \emph{game} is a tuple
$\game=\tuple{G, f}$ consisting of a game graph $G=\tuple{S, \init, E}$
and an objective $f$.  An \emph{automaton} is a game with only one
player, i.e., $S = S_1$.  We name games and automata after their objectives.
\subsection{Lexicographic mean-payoff games} 
\label{sec:lex-mean-payoff-games}

In this section, we prove that memoryless strategies are sufficient for
lexicographic mean-payoff games, 
and present an algorithm to decide these games by a reduction to simple mean-payoff
games.
We first present the solution of lexicographic mean-payoff games with
a reward function with two components, and then extend it to
$\dim$-dimensional reward functions. Consider a lexicographic
mean-payoff game $\game_{\LM}=\tuple{\tuple{S, \init, E},\lreward}$,
where $\lreward=\tuple{r_1,r_2}$ consists of two reward functions.


\paragraph{Memoryless strategies suffice.}
We show that 
memoryless strategies suffice for both players by a reduction to a
finite cycle forming game.  Let us assume we have solved the
mean-payoff game with respect to the reward function $r_1$.  Consider
a \emph{value class} of $r_1$, i.e., a set of states having the same value
with respect to $r_1$.  It is not possible for Player~1 to move to a
higher value class, and Player~1 will never choose an edge to a lower
value class.  Similarly, Player~2 does not have edges to a lower value
class and will never choose edges to a higher value class.  Thus, we
can consider the sub-game for a value class.

Consider a value
class of value $\ell$ and
the sub-game induced by the value class.
We now play the following finite-cycle forming game: Player~1 and
Player~2 choose edges until a cycle $C$ is formed.  
The payoff for the game is as follows:
\begin{enumerate}
\item
 If the mean-payoff value of the cycle $C$ for $r_1$ is greater
  than $\ell$, then Player~1 gets reward $|r_2|+1$.
 \item
   If the mean-payoff value of the cycle $C$ for $r_1$ is smaller
  than $\ell$, then Player~1 gets reward~$-1$.
\item
  If the mean-payoff value of the cycle $C$ for $r_1$ is exactly
  $\ell$, then Player~1 gets the mean-payoff value for reward $r_2$ of
  the cycle $C$.
 \end{enumerate}




\begin{Lemma}\label{lemm-claim1}\label{lemm-claim2} 
  The value of Player~1 for any state in the finite-cycle forming game
  is $(i)$ strictly greater than~$-1$ and $(ii)$ strictly less
  than~$|r_2|+1$.
\end{Lemma}
\begin{proof} 
  Since all $r_2$ rewards are positive, a memoryless optimal strategy for 
  Player~1 for reward $r_1$ ensures~$(i)$.  Since all $r_2$ rewards are less
  than $|r_2|+1$, a Player~2 memoryless optimal strategy for reward
  $r_1$ ensures~$(ii)$. \qed
\end{proof}

\begin{Lemma}\label{lemm-claim3} 
  Both players have memoryless optimal strategy in the finite-cycle
  forming game.
\end{Lemma}
\begin{proof}
  The result can be obtained from the result of Bj\"orklund
  et al.\ \cite{Bjorkl04}.  From Theorem~5.1 and the comment in
  Section~7.2 it follows that in any finite-cycle forming game in which the
  outcome depends only on the vertices that appear in the cycle
  (modulo cyclic permutations) we have that memoryless optimal strategies
  exist for both players.  Our finite-cycle forming game satisfies the
  required conditions.  
  \qed
\end{proof}

\smallskip\noindent{\bf Example.} Before presenting the main lemma,
we discuss a subtle issue.
Consider two sequences $(a^1_i)_{i\geq 0}$ and $(a^2_i)_{i\geq 0}$ 
as follows: for all $i \geq 0$ we have (i)~$a^1_{2i}=1$ and $a^1_{2i+1}=2$; and 
(ii)~$a^2_{2i}=1$ and $a^1_{2i+1}=0$.
Then we have $\tuple{\liminf_{i \to \infty} a^1_i,\liminf_{i \to \infty} a^2_i}=
\tuple{1,0}$; however, under lexicographic ordering we have 
$\liminf_{i \to \infty} \tuple{a^1_i,a^2_i}=\tuple{1,1}$, since under lexicographic
ordering $\tuple{1,1}$ is smaller than $\tuple{2,0}$.
Note that in our semantics we consider liminf average of the reward vector under 
lexicographic ordering; and not the lexicographic ordering of components where
each component is the liminf average value.

\begin{Lemma} \label{lem:lmp-finitecycle}
  The following assertions hold.
\begin{enumerate}
\item If the value of the finite-cycle forming game is $\beta$ at a
  state $s$, then the value of the lexicographic mean-payoff game is
  $\tuple{\ell,\beta}$ at $s$.
\item A memoryless optimal strategy of the finite-cycle forming game is
  optimal for the lexicographic mean-payoff game.
\end{enumerate}
\end{Lemma}

\newcommand{\ov}{\overline}
\begin{proof}
The proof has the following two parts. 
\begin{enumerate}
\item We first show that a memoryless optimal strategy for 
Player~1 in the finite-cycle forming game ensures value 
at least $\tuple{\ell,\beta}$.
Fix a memoryless optimal strategy $\strat_1$ for Player~1 for
  the finite-cycle forming game: such a strategy exists by
  Lemma~\ref{lemm-claim3}.  Observe that by Lemma~\ref{lemm-claim1} we
  have $\beta>-1$.  In the resulting graph, for any cycle $C$
  reachable from $s$, the following assertions hold (because of 
  optimality of $\strat_1$): 
\begin{enumerate}
\item \emph{Property~1:} the mean-payoff reward for $r_2$ is at least $\beta$ 
and the mean-payoff reward for $r_1$ is at least $\ell$; or 

\item \emph{Property~2:} the mean-payoff reward for $r_1$ is at least 
$\ov{\ell} > \ell$ (i.e., the reward is strictly greater than $\ell$ and 
in this case the mean-payoff reward for $r_2$ can be less than $\beta$).

\end{enumerate} 
Consider any strategy $\strat_2$ for Player~2 and the path 
$\pat=\pat(s,\strat_1,\strat_2)$ and consider any prefix of length $n$ of 
$\pat$.  
The prefix can be decomposed as a finite-prefix of length at most $|S|$, 
then cycles in the graph, and then a trailing prefix of length at most $|S|$.
For a prefix of length $n$, let us denote by $J_1(n)$ the sum total of the 
steps of cycles in the prefix that satisfies property~1, and by $J_2(n)$ the
sum total of the steps of cycles in the prefix that satisfies property~2.
It follows that 
$\lim_{n\maps \infty} \frac{J_1(n) + J_2(n)}{n} \geq  
\lim_{n\maps \infty} \frac{n-2\cdot |S|}{n} =1$.
Since rewards are non-negative, it follows that for any $n$ we have 
\[
(a)~\sum_{i=0}^n r_1(\tuple{\pat_i,\pat_{i+1}}) 
\geq J_1(n) \cdot \ell + J_2(n) \cdot \ov{\ell};
\]
and
\[
(b)~\sum_{i=0}^n r_2(\tuple{\pat_i,\pat_{i+1}}) 
\geq J_1(n) \cdot \beta.
\]
Let $|R|=max\set{|r_1|,|r_2|}$.
Let $\lim\inf_{n \maps \infty} \frac{J_2(n)}{n}=\kappa$.
The following two case analysis completes the result.
\begin{enumerate}

\item If $\kappa>0$, then we have 

\[
\begin{array}{rcl}
\displaystyle
\lim\inf_{n\maps\infty} \frac{1}{n} \sum_{i=0}^{n}
  r_1(\tuple{\pat_i,\pat_{i+1}}) 
& \geq & 
\displaystyle
\lim\inf_{n\maps\infty} \left(\frac{J_1(n)}{n} \cdot \ell + \frac{J_2(n)}{n} \cdot \ov{\ell} \right) \\[2ex]
& = & 
\displaystyle
\lim\inf_{n\maps\infty} \left(\frac{J_1(n)}{n} \cdot \ell + \frac{J_2(n)}{n} \cdot (\ell+\ov{\ell}-\ell) \right) \\[2ex]
& \geq & 
\displaystyle
\lim\inf_{n\maps\infty} \left(\frac{J_1(n)+J_2(n)}{n} \cdot \ell\right) + 
\lim\inf_{n\maps\infty} \left(\frac{J_2(n)}{n} \cdot (\ov{\ell}-\ell) \right) \\[2ex]
& = & 
\displaystyle
\ell + \kappa \cdot (\ov{\ell}- \ell) > \ell.
\end{array}
\]
where the second inequality above comes from the standard fact that given two sequences $(a_n)_{n\geq 0}$ and
$(b_n)_{n \geq 0}$ of non-negative real numbers, we have 
$\lim\inf_{n \maps \infty}(a_n + b_n) \geq
\lim\inf_{n \maps \infty}(a_n) + \lim\inf_{n \maps \infty}(b_n)$;
and the final inequality uses that $\kappa>0$ and $\ov{\ell}> \ell$.
It follows that the lexicographic mean-payoff value
of $\pat$ is at least
$\tuple{\ell + \kappa \cdot (\ov{\ell}- \ell),0}
\geq \tuple{\ell,\beta}$.

\item If $\kappa=0$, then we have 
\[
\lim\inf_{n\maps\infty} \frac{1}{n} \sum_{i=0}^{n}
  r_1(\tuple{\pat_i,\pat_{i+1}}) 
\geq \ell.
\]
We now argue about the second component.
Fix $\epsilon>0$. 
Consider any length $n$ of the play prefix such that the average 
of the first component is at most $\ell+ \epsilon$.
Let $J(n)=J_1(n) + J_2(n)$, and note that $n-J(n) \leq 2\cdot |S|$.
Also note that $\ell \leq |R|$.
Thus we have the following inequalities:
\[
\begin{array}{rcl}
\ell + \epsilon & \geq &
\displaystyle 
\frac{1}{n} \sum_{i=0}^{n} r_1(\tuple{\pat_i,\pat_{i+1}}) \\[3ex]
& \geq & 
\displaystyle \frac{J_1(n)}{n}\cdot {\ell} + \frac{J_2(n)}{n}\cdot \ov{\ell}  \\[2ex]
& = & 
\displaystyle \frac{J_1(n)}{n}\cdot {\ell} + \frac{J_2(n)}{n}\cdot (\ov{\ell}-\ell +\ell)   \\[2ex]
& = & 
\displaystyle \frac{J_1(n)+J_2(n)}{n}\cdot {\ell} + \frac{J_2(n)}{n}\cdot (\ov{\ell}-\ell) \\[2ex]
& = & 
\displaystyle 
\frac{J(n)}{n}\cdot \ell +  
\frac{J_2(n)}{n}\cdot (\ov{\ell}-\ell)  \\[2ex]
& = & 
\displaystyle 
\ell + \frac{J_2(n)}{n}\cdot (\ov{\ell}-\ell) - \frac{n-J(n)}{n}\cdot \ell \\[2ex]
& \geq & 
\displaystyle 
\ell + \frac{J_2(n)}{n}\cdot (\ov{\ell}-\ell) - \frac{2\cdot |S|\cdot |R|}{n} 
\end{array}
\]
Hence it follows that $\frac{J_2(n)}{n}\leq \frac{\epsilon}{(\ov{\ell}-\ell)} + 
\frac{2\cdot |S|\cdot |R|}{n \cdot (\ov{\ell}-\ell)}$.
We now establish a bound on the average for the second component for the
length $n$.
We have
\[
\begin{array}{rcl}
\displaystyle 
\frac{1}{n} \sum_{i=0}^{n} r_2(\tuple{\pat_i,\pat_{i+1}}) 
& \geq & 
\displaystyle 
\frac{J_1(n)}{n}\cdot {\beta}
\\[2ex]
& = & 
\displaystyle 
\beta - \frac{J_2(n)}{n}\cdot \beta - \frac{(n-J(n))}{n}\cdot \beta \\[2ex]
& \geq & 
\displaystyle 
\beta - \frac{J_2(n)}{n}\cdot |R| - \frac{2\cdot |S|\cdot |R|}{n} \\[2ex]
& \geq & 
\displaystyle 
\beta -  \frac{\epsilon}{(\ov{\ell}-\ell)} \cdot |R| - 
\frac{2\cdot |S|\cdot |R|}{n\cdot(\ov{\ell}-\ell)} - \frac{2\cdot |S|\cdot |R|}{n} \\[3ex]
& = & 
\displaystyle 
\beta -  \frac{\epsilon}{(\ov{\ell}-\ell)} \cdot |R| - 
\frac{2\cdot |S|\cdot |R|}{n}\cdot\left(1+\frac{1}{(\ov{\ell}-\ell)}\right) 
\end{array}
\]
In the first equality above we write $\frac{J_1(n)}{n}= 1-\frac{J_2(n)}{n} - \frac{(n-J(n))}{n}$.
We also use that $\beta \leq |R|$ and 
$\frac{J_2(n)}{n}\leq \frac{\epsilon}{(\ov{\ell}-\ell)} + 
\frac{2\cdot |S|\cdot |R|}{n \cdot (\ov{\ell}-\ell)}$.
It follows that whenever the average of the first component is at most 
$\ell +\epsilon$, then the average of the second component is at least 
$\beta -  \frac{\epsilon}{(\ov{\ell}-\ell)} \cdot |R| - 
\frac{2\cdot |S|\cdot |R|}{n} \cdot\left( 1 + \frac{1}{(\ov{\ell}-\ell)}\right)$,
where $n$ is the length of the play.
Now fix $\epsilon_1>0$. 
Consider $n_0\in \nat$ such that 
$\frac{2\cdot |S|\cdot |R|}{n_0} \cdot\left( 1 + \frac{1}{(\ov{\ell}-\ell)}\right) \leq \epsilon_1$.
Consider any prefix of $\pat$ of length $n_1$ such that $n_1 \geq n_0$.
Then for the prefix either the average of the first component is 
at least $\ell+\epsilon$; or the average of the second component is at least
$\beta - \frac{\epsilon}{(\ov{\ell}-\ell)} \cdot |R| -\epsilon_1$.
We know the liminf average of the first component is $\ell$. 
It follows that the liminf average vector is at least 
$\tuple{\ell,\beta - \frac{\epsilon}{(\ov{\ell}-\ell)} \cdot |R|}$.
Since $\epsilon>0$ is arbitrary, it follows that 
the liminf average vector is at least 
$\tuple{\ell,\beta}$.
This completes the argument.

\end{enumerate}
It follows that $\LM(\pat) \geq \tuple{\ell,\beta}$.

\item Fix a memoryless optimal strategy $\strat_2$ for Player~2 for
  the finite-cycle forming game: such a strategy exists by
  Lemma~\ref{lemm-claim3}.
  Observe that by Lemma~\ref{lemm-claim2} we
  have $\beta<|r_2|+1$.  Then in the resulting graph, for any cycle $C$
  reachable from $s$, the following properties hold due to optimality of 
  $\strat_2$: 
 \begin{enumerate} 
 \item \emph{Property~1.} the mean-payoff reward for $r_1$ is at
  most $\ell$ and the mean-payoff reward for $r_2$ is at most $\beta$; or 
 \item \emph{Property~2.} the mean-payoff reward for $r_1$ is at most 
  $\ov{\ell} < \ell$ (the mean-payoff reward for $r_1$ is strictly smaller 
  than $\ell)$.
\end{enumerate}
By analysis similar to the previous case, it follows that the lexicographic
mean-payoff value is at most $\tuple{\ell,\beta}$. \qed
\end{enumerate}
\end{proof}


\paragraph{Reduction to mean-payoff games.}
\label{sec:reduction}
We now sketch a reduction of lexicographic mean-payoff games to
mean-payoff games for optimal strategies.
%
%
We reduce the reward function $\lreward=\tuple{r_1, r_2}$ to a single
reward function $r^*$.  We ensure that if the mean-payoff difference
of two cycles $C_1$ and $C_2$ for reward $r_1$ is positive, then the
difference in reward assigned by $r^*$ exceeds the largest possible
difference in the mean-payoff for reward $r_2$.  Consider two cycles
$C_1$ of length $n_1$ and $C_2$ of length $n_2$, such that the sum of
the $r_1$ rewards of $C_i$ is $\alpha_i$.  Since all rewards are
integral, $|\frac{\alpha_1}{n_1} - \frac{\alpha_2}{n_2}| >0$ implies
$| \frac{\alpha_1}{n_1} - \frac{\alpha_2}{n_2}| \geq \frac{1}{n_1
  \cdot n_2}$.
Hence we multiply the $r_1$ rewards by $m=|S|^2 \cdot |r_2|+1$ to
obtain $r^* = m \cdot r_1 + r_2$.  This ensures that if the
mean-payoff difference of two cycles $C_1$ and $C_2$ for reward $r_1$
is positive, then the difference exceeds the difference in the
mean-payoff for reward $r_2$.  Observe that we restrict our attention
to cycles only since we have already proven that optimal memoryless
strategies exist.

We can easily extend this reduction to reduce
lexicographic mean-payoff games with arbitrarily many reward functions 
to mean-payoff games.  
The following theorem follows from this reduction in combination with
known results for mean payoff parity games \cite{Ehrenf79,Zwick96}.  
\begin{theorem}[Lexicographic mean-payoff games]
  \label{thm:lex-mean-payoff-games}
  For all lexicographic mean-payoff games $\game_{\LM}=\tuple{\tuple{S, \init, E},\lreward}$, the following
  assertions hold: 
 \begin{enumerate} 
 \item {\em (Determinacy.) } For all states $s\in S$, we have 
  $\Value_1(\LM, s) + \Value_2(-\LM, s) = \vec{0}$.
 \item {\em (Memoryless optimal strategies.)} Both players have
   memoryless optimal strategies from every state $s\in S$.
 \item {\em (Complexity).} Whether the lexicographic mean-payoff value
   vector at a state $s \in S$ is at least a rational value
   vector~$\vec{v}$ can be decided in NP $\cap$ coNP.
 \item {\em (Algorithms).}  The lexicographic mean-payoff value vector
   for all states can be computed in time $O(|S|^{2\dim + 3}\cdot |E| \cdot
   |\lreward| )$.
\end{enumerate}
\end{theorem}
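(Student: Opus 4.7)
The plan is to derive each of the four assertions from the reduction to simple mean-payoff games already sketched and from the $d=2$ machinery developed in Lemmas~\ref{lemm-claim1}--\ref{lem:lmp-finitecycle}. The core idea is to iterate the 2-dimensional reduction: given a $d$-dimensional game with rewards $\tuple{r_1,\ldots,r_d}$, I would first combine $r_1$ and $r_2$ into the scalar $r^{*} = m\cdot r_1 + r_2$ with $m = |S|^2\cdot |r_2|+1$, forming a $(d{-}1)$-dimensional game with rewards $\tuple{r^{*}, r_3,\ldots,r_d}$.  The multiplier $m$ is chosen so that any strictly larger cycle mean for $r_1$ dominates every possible gap in the $r_2$-means (exactly the argument of the sketched reduction), so the lexicographic order on mean-payoff vectors is preserved. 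After $d-1$ iterations the game collapses to a single-dimensional mean-payoff game, which is classical.

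For \emph{determinacy} and \emph{memoryless optimal strategies}, I would argue by induction on $d$. The base $d=1$ is the Ehrenfeucht--Mycielski theorem. For the inductive step, Lemma~\ref{lem:lmp-finitecycle} says that within each value class of the first component $r_1$, the optimal value of the remaining components is exactly the value of a finite-cycle forming game which is memoryless-determined (Lemma~\ref{lemm-claim3}); since the residual game inside a value class is again a lexicographic mean-payoff game of dimension $d-1$ (with rewards $\tuple{r_2,\ldots,r_d}$), by induction both players have memoryless optimal strategies there, and these can be stitched together across value classes into a single memoryless optimal strategy in the original game. Determinacy follows from the matching values attained by the two players' strategies.

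For the \emph{NP $\cap$ coNP} bound, I would use the memoryless-sufficiency just established: to decide whether $\Value_1(\LM,s) \geq \vec{v}$, guess a memoryless $\strat_1$ for Player~1 and verify in polynomial time that the one-player graph induced by $\strat_1$ has lexicographic mean-payoff $\geq \vec{v}$ against every Player-2 response. The verification reduces to computing, in the one-player graph, $d$ nested scalar mean-payoff values on value-class sub-graphs, each solvable in polynomial time by linear programming. The coNP argument is symmetric, guessing a memoryless $\strat_2$.

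For the \emph{algorithm}, I would apply the Zwick--Paterson $O(|V|^3\cdot|E|\cdot W)$ bound to the fully reduced scalar game. Each pairwise reduction inflates the maximum absolute weight by a factor of at most $|S|^2\cdot|r_i|$, so after $d{-}1$ iterations the scalar weights are bounded by $O(|S|^{2(d-1)}\cdot |\lreward|)$; multiplying by Zwick--Paterson yields the stated $O(|S|^{2d+3}\cdot|E|\cdot|\lreward|)$. The main obstacle I anticipate is the inductive step of the memoryless-strategy argument, specifically showing that the per-value-class memoryless strategies recombine correctly at the boundaries where $r_1$-value classes meet, and that no player has incentive to cross a class boundary under the combined strategy; this is exactly the content of Lemma~\ref{lem:lmp-finitecycle} lifted from $d=2$ to arbitrary $d$, and care is needed to ensure that the ``value-class'' decomposition is well-defined when computed recursively (i.e., that a value class for $r_1$ is closed under both players' optimal moves in the higher-dimensional game, not merely in the $r_1$-game alone).
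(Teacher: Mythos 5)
Your proposal follows essentially the same route as the paper: memoryless sufficiency via the value-class decomposition and the finite-cycle forming game (Lemmas~\ref{lemm-claim1}--\ref{lem:lmp-finitecycle}), the multiplier reduction $r^{*}=m\cdot r_1+r_2$ with $m=|S|^2\cdot|r_2|+1$ iterated down to a scalar mean-payoff game, and then the classical results (Ehrenfeucht--Mycielski for determinacy, guess-and-check of memoryless strategies for NP $\cap$ coNP, Zwick--Paterson for the algorithm). The subtlety you flag at the end --- that value classes must be closed under both players' optimal moves and that the cycle-mean ordering (not the ordering on arbitrary plays) is what the reduction preserves, which is legitimate only after memoryless sufficiency is established --- is exactly the point the paper addresses, so the proposal is correct and matches the paper's argument.
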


\subsection{Lexicographic Mean-Payoff Parity Games}
\label{sec:lex-mean-payoff-parity-games}

Lexicographic mean-payoff parity games are a natural lexicographic
extension of mean-payoff parity games~\cite{Chatte05}.
The algorithmic solution for mean-payoff parity games is a recursive
algorithm, where each recursive step requires the solution of a parity
objective and a mean-payoff objective.  The key correctness argument
of the algorithm relies on the existence of memoryless optimal
strategies for parity and mean-payoff objectives.  Since memoryless
optimal strategies exist for lexicographic mean-payoff games, the
solution of mean-payoff parity games extends to lexicographic
mean-payoff parity games: in each recursive step, we replace the
mean-payoff objective by a lexicographic mean-payoff objective.
Moreover, recent results of~\cite{CD12} that shows mean-payoff parity games
lie in NP $\cap$ coNP also extend to lexicographic mean-payoff parity games.
Thus, we have the following result.

\begin{theorem} [Lexicographic mean-payoff parity games]
  \label{thm:lex-mean-payoff-parity-games}
  For all lexicographic mean-payoff parity games
  $\game_{\LMP}=\tuple{\tuple{S, \init, E},\lreward,\priority}$,
  the following assertions hold. 
   \vspace{-5pt}
\begin{enumerate} 
 \item {\em (Determinacy).}  $\Value_1(\LMP,s) + \Value_2(-\LMP,s) =
   \vec{0}$ for all state $s \in S$.
   
 \item {\em (Optimal strategies).} 
   Optimal strategies for Player~1 exist but may require infinite memory; 
   memoryless optimal strategies exist for Player~2. 

 \item {\em (Complexity).} Whether the value at a state $s \in S$ is
   at least the vector~$\vec{v}$ can be decided in NP $\cap$ coNP.

 \item {\em (Algorithms).} The value for all states can be computed in
   time $O\big(|S|^\maxpriority\cdot(
   \min\set{|S|^{\frac{\maxpriority}{3}}\cdot |E|, |S|^{O(\sqrt{S})} }
   + |S|^{2\dim + 4} \cdot |E| \cdot |\lreward|)\big)$.
\end{enumerate}
\end{theorem}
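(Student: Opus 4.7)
The plan is to lift the recursive algorithm of Chatterjee, Henzinger, and Jurdzi\'nski for mean-payoff parity games to the lexicographic setting by using Theorem~\ref{thm:lex-mean-payoff-games} as a drop-in replacement wherever that algorithm invokes the one-dimensional mean-payoff solver. Recall that the mean-payoff parity algorithm recurses on the priorities: at the smallest priority $d$, one either (if $d$ is even) iteratively removes attractors to positions from which Player~2 can force staying in low priorities with mean-payoff strictly greater than the claimed threshold, or (if $d$ is odd) symmetrically. The correctness pivots on memoryless optimality of both players in each mean-payoff sub-game; since lexicographic mean-payoff games also enjoy memoryless determinacy by Theorem~\ref{thm:lex-mean-payoff-games}(2), the same case analysis goes through componentwise under the lexicographic order.

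For part~(1), determinacy follows because $\LMP_{\lreward,\priority}$ is a Borel-measurable function into $\real^\dim \cup \{\minimum\}$ with lexicographic order, so Martin's Borel determinacy applies; alternatively, determinacy can be extracted constructively from the recursive algorithm, which exhibits matching strategies for both players. For part~(2), Player~2's finite-memory strategy is obtained by combining a memoryless optimal strategy in the corresponding lexicographic mean-payoff game (Theorem~\ref{thm:lex-mean-payoff-games}) with a memoryless optimal strategy in the residual parity game, glued with finitely many memory bits to track which phase the play is in. Player~1, by contrast, must in general interleave a lexicographic mean-payoff optimizing regime with occasional detours to the smallest even priority; as in the one-dimensional case, these detours are made ever more infrequent so as not to perturb any coordinate of the long-run average. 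This is exactly the source of the infinite-memory requirement, and an explicit infinite-memory construction (growing phases of length $2^k$) witnesses an $\vec{\epsilon}$-optimal strategy.

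For part~(3), coNP membership is obtained by the standard guess-and-check: for the complementary problem (``value at $s$ is strictly less than $\vec{v}$'') Player~2 has a finite-memory optimal strategy, which can be guessed and verified by solving a one-player lexicographic mean-payoff parity game, itself reducible via the multiplier trick of Section~3.2 (multiplying the $i$-th component by a sufficiently large constant so higher components dominate) to a single-objective problem decidable in P. For part~(4), the running time is a product of the parity-game recursion depth, the cost of the parity sub-solver, and the cost of the lexicographic mean-payoff oracle given by Theorem~\ref{thm:lex-mean-payoff-games}(4); the stated bound follows by straightforward bookkeeping.

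The main obstacle I anticipate is the proof that the infinite-memory construction for Player~1 actually achieves a value arbitrarily close to the lexicographic optimum in \emph{every} coordinate simultaneously. In the scalar case one shows that inserting $o(n)$ parity detours into a length-$n$ prefix perturbs the mean payoff by $o(1)$; in the lexicographic case one must argue that the same detour schedule simultaneously preserves all $\dim$ coordinates, which reduces to bounding $|R|= \max_i |r_i|$ times the detour density and choosing a schedule where this vanishes. Once this uniform perturbation bound is in place, the rest of the argument (memoryless optimal plays in the sub-games, combined with parity detours) assembles exactly as in the one-dimensional analysis.
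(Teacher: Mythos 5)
Your proposal follows essentially the same route as the paper: the paper's entire argument is that the recursive algorithm of~\cite{Chatte05} for mean-payoff parity games carries over verbatim once the mean-payoff sub-solver is replaced by the lexicographic one, because the correctness of that algorithm rests only on memoryless optimal strategies for the (now lexicographic) mean-payoff and parity sub-objectives, which Theorem~\ref{thm:lex-mean-payoff-games} supplies. The additional details you sketch (the round-based infinite-memory strategy for Player~1 with increasingly rare parity detours, the finite-memory combination for Player~2, and the uniform $o(1)$ perturbation bound across all $\dim$ coordinates) are consistent with, and somewhat more explicit than, what the paper itself records in its discussion preceding Theorem~\ref{thm:lex-mean-payoff-parity-epsilon}.
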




In the following, we prove two properties of mean-payoff parity games
that are interesting for synthesis.  For simplicity, we present the
results for mean-payoff parity games.  The results extend to
lexicographic mean-payoff parity games as in 
Theorem~\ref{thm:lex-mean-payoff-parity-games}.
First, we show that the algorithm of \cite{Chatte05} can be adapted to
compute finite-memory strategies that are $\ve$-optimal.  Then, we show
that Player~1 has a finite-memory optimal strategy if and only if she
has a memoryless optimal strategy.

\paragraph{Finite-memory $\ve$-optimal strategy.} 
In mean-payoff parity games, though optimal strategies require
infinite memory for Player~1, there is a finite-memory
$\epsilon$-optimal strategy for every $\ve>0$.  The proof of this
claim is obtained by a more detailed analysis of the optimal strategy
construction of~\cite{Chatte05}.  The optimal strategy constructed
in~\cite{Chatte05} for Player~1 can be intuitively described as
follows.  The strategy is played in rounds, and each round has three
phases: (a) playing a memoryless optimal mean-payoff strategy; 
(b) playing a strategy in a sub-game; (c) playing a memoryless attractor 
strategy to reach a desired priority.  
Then the strategy proceeds to the next round.  
The length of the first phase is monotonically increasing in the number of
rounds, and it requires infinite memory to count the rounds.  Given an
$\ve>0$, we can fix a bound on the number of steps in the first phase
that ensures a payoff within $\ve$ of the optimal value.  Hence, a
finite-memory strategy can be obtained.  

\begin{figure}[tb]
  \centerline{\def\rechts{45}
\def\oben{20}
\def\links{0}
\def\unten{3}

\begin{picture}(\rechts,\oben)(\links,\unten)
{\fsize
\gasset{loopdiam=\loop,Nw=\Nw,Nh=\Nh,Nmr=\Nmr}

\node[Nmarks=i](q0)(10,10){$s_0$}
\node[Nmarks=nr](q1)(40,10){$s_1$}

\drawloop[ELside=l, ELdist=1](q0){$10$}
\drawedge[ELpos=50, ELside=l, ELdist=1, curvedepth=6](q0,q1){$10$}
\drawedge[ELpos=50, ELside=l, ELdist=1, curvedepth=6](q1,q0){$0$}
}
\end{picture}

  \caption{Game in which the optimal strategy requires infinite
    memory.}
  \label{fig:fin-memory}
\end{figure}
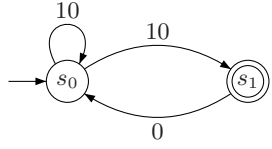
We illustrate the idea with
an example.  Consider the example shown in Fig.~\ref{fig:fin-memory}
where we have a game graph where all states belong to Player~1.  The
goal of Player~1 is to maximize the mean-payoff while ensuring that
state $s_1$ is visited infinitely often.  An optimal strategy is as
follows: the game starts in round~1.  In each round $i$, the edge
$s_0 \to s_0$ is chosen $i$ times, then the edge $s_0 \to s_1$ is
chosen once, and then the game proceeds to round $i+1$.  Any optimal
strategy in the game shown requires infinite memory.  However, given
$\ve>0$, in every round the edge $s_0 \to s_0$ can be chosen a fixed
number $K$ times such that $K > \frac{10}{\ve}-2$.
Then the payoff is 
$\frac{10\cdot K + 10}{K+2} =10 - \frac{10}{K+2} \geq 10 -\ve$ (since  $K > \frac{10}{\ve}-2$);
which is within $\ve$ of the value.  
It may also be noted that given $\ve>0$, the finite-memory optimal 
strategy can be obtained as follows.
We apply the recursive algorithm to solve the game to obtain  
two memoryless strategies: one for the mean-payoff strategy and 
other for the attractor strategy.
We then specify the bound (depending on $\ve$) on the number of 
steps for the mean-payoff strategy for each phase 
(this requires an additional $O(\frac{1}{\ve})$ time for the 
strategy description after the recursive algorithm).

\begin{theorem}\label{thm:lex-mean-payoff-parity-epsilon}
  For all lexicographic mean-payoff parity games and for all $\ve>0$,
  there exists a finite-memory $\ve$-optimal strategy for Player~1.
  Given $\ve>0$, a finite-memory $\ve$-optimal strategy can be constructed 
  in time $O(|S|^\maxpriority \cdot |E|^{2\dim + 6} \cdot |\lreward| + \frac{1}{\ve})$.
\end{theorem}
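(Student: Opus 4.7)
The plan is to adapt the round-based optimal strategy construction of Chatterjee, Henzinger, and Jurdziński~\cite{Chatte05} for mean-payoff parity games, truncating the one phase that is responsible for infinite memory, and then lift the truncation argument to the lexicographic setting using the memoryless-optimality of lexicographic mean-payoff games from Theorem~\ref{thm:lex-mean-payoff-games}. I would first treat the scalar ($\dim=1$) case, essentially following the example with $s_0,s_1$ given in the paper, and then transfer the analysis coordinate-wise.

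I would begin by recalling the structure of the optimal strategy of~\cite{Chatte05}: it plays in rounds, and round $k$ has three phases, namely (a)~play a memoryless optimal mean-payoff strategy for $N_k$ steps, (b)~play recursively an optimal strategy in a sub-game of strictly smaller priority, and (c)~play a memoryless attractor strategy reaching a state of the target even priority. Only the monotone growth of $N_k$ across rounds requires infinite memory; phases (b) and (c) together contribute at most $O(|S|)$ steps per round. Given $\ve>0$, I replace the growing counter by a single constant $N$, so that every round uses the same schedule, yielding a finite-memory strategy with $O(N)$ states of memory. Choosing $N = \lceil c\cdot |S| \cdot |\lreward|/\ve\rceil$ for an appropriate constant $c$, a direct averaging estimate (analogous to $\frac{10K+10}{K+2} \geq 10-\ve$ in the paper's example) shows that the per-round long-run average payoff is at least $v^* - \ve$, where $v^*$ is the mean-payoff parity value. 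The parity condition is preserved because phase (c) forces a visit to the target priority in every round.

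To extend to the lexicographic case, I would invoke Theorem~\ref{thm:lex-mean-payoff-games} to replace the memoryless mean-payoff strategy in phase (a) with a memoryless optimal strategy for the lexicographic mean-payoff sub-game, and then apply the same truncation. Setting $N = \lceil c\cdot |S|\cdot |\lreward| / \min_i \ve_i\rceil$ ensures that in each coordinate $i$ the long-run average is within $\ve_i$ of the optimal value $v^*_i$ simultaneously; because lexicographic ordering is dominated by the first coordinate where two vectors differ, coordinate-wise closeness implies $\vec{\ve}$-optimality in the lexicographic sense. The complexity bound then adds the $O(1/\ve)$ cost of storing the phase-(a) counter to the cost of the underlying recursive algorithm in Theorem~\ref{thm:lex-mean-payoff-parity-games}.

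The main obstacle I expect is justifying that the uniform choice of $N$ works across all coordinates and all rounds at once: one must rule out that Player~2, knowing the fixed schedule, can construct responses whose $r_i$-means fall below $v^*_i - \ve_i$ within phase (a). This reduces, via Theorem~\ref{thm:lex-mean-payoff-games}, to the standard cycle-decomposition argument used in the proof of Lemma~\ref{lem:lmp-finitecycle}: on the memoryless optimal sub-game of phase (a), any cycle visited has lexicographic mean at least the value, so the only losses are the $O(|S|\cdot|\lreward|)$ transient contributions from phases (b) and (c), which are washed out by dividing by $N$. Once this is in place, the theorem statement and the claimed running time follow immediately.
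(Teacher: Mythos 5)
Your proposal follows essentially the same route as the paper: it takes the round-based optimal strategy of~\cite{Chatte05}, replaces the monotonically growing length of the memoryless mean-payoff phase by a fixed bound $N=O(|S|\cdot|\lreward|/\ve)$ so that the transient contributions of the sub-game and attractor phases are washed out (exactly the $\frac{10K+10}{K+2}\ge 10-\ve$ calculation generalized), lifts the construction to the lexicographic case via the memoryless optimal strategies of Theorem~\ref{thm:lex-mean-payoff-games}, and accounts for the extra $O(1/\ve)$ in the running time. This matches the paper's argument, and your additional observations (the coordinate-wise to lexicographic transfer, and the cycle-decomposition justification that a fixed $N$ suffices against all Player~2 responses) are correct elaborations of steps the paper leaves implicit.
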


\paragraph{Optimal finite-memory and memoryless strategies.}
Consider a mean-payoff parity game
$\game=\tuple{\tuple{S, \init, E},\reward,\priority}$.  Our goal is to
show that if there is a finite-memory optimal strategy for Player~1,
then there is a memoryless optimal strategy for Player~1.  Suppose
there is a finite-memory optimal strategy $\wh{\pi}_1$ for Player~1.
Consider the finite graph $\wh{\game}$ obtained by fixing the strategy
$\wh{\pi}_1$. ($\wh{\game}$ is obtained as the synchronous product of
 the given game graph and finite-state strategy automaton for
 $\wh{\pi}_1$.)  
For a state $s \in S$, consider any cycle $\wh{C}$
in $\wh{\game}$ that is reachable from $\tuple{s,q_0}$ (where $q_0$
is the initial memory location) and $\wh{C}$ is executed to ensure
that Player~1 does not achieve a payoff greater than the value of the
game from~$s$.  We denote by $\wh{C}|_\game$ the sequence of states
in $\game$ that appear in $\wh{C}$.
We call a cycle $C$ of $\game$ that appears in $\wh{C}|_\game$ a
\emph{component cycle of $\wh{C}$}.  We have the following properties
about the cycle $\wh{C}$ and its component cycles. 
  \vspace{-5pt}
\begin{enumerate}
\item $\min(\priority(\wh{C}|_\game))$ is even.
\item 
  Suppose there is a component cycle $C$ of $\wh{C}$ such that
  the average of the rewards of $C$ is greater than the average of the
  rewards of $\wh{C}$. If Player~2 fixes a finite-memory strategy that corresponds to 
  the execution of cycle $\wh{C}$,  then an infinite-memory strategy can be
  played by Player~1 that pumps the cycle~$C$ longer and longer to
  ensure a payoff that is equal to the average of the weights of $C$.
  The infinite memory strategy ensures that all states in
  $\wh{C}|_\game$ are visited infinitely often, but the long-run
  average of the rewards is the average of the rewards of $C$.  This
  would imply that for the cycle $\wh{C}$, Player~1 can switch to an
  infinite-memory strategy and ensure a better payoff.
\item If there is component cycle $C$ of $\wh{C}$ such that
  $\min(\priority(C)) > \min(\priority(\wh{C}|_\game))$, then the cycle
  segment of $C$ can be ignored from $\wh{C}$ without affecting the
  payoff.
\item Suppose we have two component cycles $C_1$ and $C_2$ in
  $\wh{C}$ such that $\min(\priority(C_1)) = \min(\priority(C_2)) =
  \min(\priority(\wh{C}|_\game))$, then one of the cycles can be
  ignored without affecting the payoff.
\end{enumerate}
It follows from above that if the finite-memory strategy $\wh{\pi}_1$
is an optimal one, then it can be reduced to a strategy $\pi_1'$ such
that if Player~2 fixes a finite-memory counter-optimal strategy $\pi_2$, 
then every cycle $C$ in the finite graph obtained from fixing
$\pi_1'$ and $\pi_2$ is also a cycle in the original game graph.  
Since finite-memory optimal strategies exist for Player~2, 
a correspondence of the value of the game and the value of the following 
finite-cycle forming game can be established.  
The finite-cycle forming game is played on $\game$ and the game stops when a 
cycle~$C$ is formed, and the payoff is as
follows: if $\min(\priority(C))$ is even, then the payoff for Player~1
is the average of the weights of the $C$, otherwise the payoff for
Player~1 is $\minimum$. 
The existence of pure memoryless optimal
strategy in the finite-cycle forming game can be obtained from the
results of Bj\"{o}rklund et al.~\cite{Bjorkl04}.  This concludes the proof
of the following theorem.

\begin{theorem}
  \label{thm:optimal}
  For all lexicographic mean-payoff parity games, if Player~1 has a
  finite-memory optimal strategy, then she has a memoryless optimal
  strategy. 
\end{theorem}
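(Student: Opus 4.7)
The plan is to mirror the reasoning used in Section~\ref{sec:lex-mean-payoff-games} for lexicographic mean-payoff games: reduce the problem to a finite-cycle forming game on the \emph{original} graph $\game$, then invoke the theorem of Bj\"orklund et al.~\cite{Bjorkl04} which guarantees memoryless optimal strategies whenever the payoff depends only on the cyclic sequence of vertices.

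First I would take any finite-memory optimal Player~1 strategy $\wh{\pi}_1$ and form the product game $\wh{\game}$ of $\game$ with the finite-state strategy automaton for $\wh{\pi}_1$. All remaining choices in $\wh{\game}$ belong to Player~2. I would then fix a finite-memory counter-optimal Player~2 strategy $\pi_2$ (which exists by Theorem~\ref{thm:lex-mean-payoff-parity-games}), obtaining a finite graph in which every infinite play eventually settles into a simple cycle $\wh{C}$, whose projection $\wh{C}|_\game$ onto $\game$ is a closed walk that decomposes into \emph{component cycles} of $\game$.

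Next I would verify the four structural properties for any such $\wh{C}$ used by Player~2 to enforce the value. Property~1 (even minimum priority of $\wh{C}|_\game$) follows because $\wh{\pi}_1$ is winning for parity. Properties~3 and~4 are straightforward pruning arguments: component cycles whose minimum priority strictly exceeds that of $\wh{C}|_\game$ are irrelevant to the parity payoff, and among component cycles achieving the minimum priority, only one is required, so the others can be spliced out without changing either the parity condition or the long-run average. The main obstacle is Property~2, which requires the pumping argument: if some component cycle $C$ had a lexicographically larger mean-payoff than $\wh{C}|_\game$, then against $\pi_2$ Player~1 could abandon $\wh{\pi}_1$ and instead replay $C$ in $\game$ for longer and longer segments, interspersed with short traversals of the rest of $\wh{C}|_\game$. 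Suitably increasing segment lengths make the long-run average converge to the mean of $C$, while the states of $\wh{C}|_\game$ (and in particular those of minimum priority) are still visited infinitely often, so the parity condition is preserved. This yields a strictly better infinite-memory strategy, contradicting optimality of $\wh{\pi}_1$.

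Having established these four properties, I would prune $\wh{\pi}_1$ into a reduced strategy $\pi_1'$ such that, after Player~2 fixes any finite-memory counter-optimal response, every cycle in the resulting finite graph projects to a bona fide cycle of $\game$ with the same mean-payoff and minimum priority. The game value is therefore attained by payoffs of cycles in $\game$ itself, with the payoff of a cycle $C$ being its lexicographic mean-payoff when $\min(\priority(C))$ is even and $\minimum$ otherwise. Finally, as in the proof of Lemma~\ref{lemm-claim3}, I would invoke the result of Bj\"orklund et al.~\cite{Bjorkl04} on finite-cycle forming games whose payoff is invariant under cyclic permutations to conclude that memoryless optimal strategies exist in this cycle-forming game; such a memoryless Player~1 strategy directly yields the required memoryless optimal strategy in the original lexicographic mean-payoff parity game.
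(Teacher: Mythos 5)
Your proposal follows essentially the same route as the paper's own argument: fix the finite-memory optimal strategy $\wh{\pi}_1$, form the product graph, establish the same four structural properties of the witnessing cycle $\wh{C}$ and its component cycles (with the same pumping argument for the mean-payoff property and the same splicing arguments for the parity properties), prune to a strategy whose induced cycles are cycles of $\game$, and conclude via the finite-cycle forming game and the memoryless-determinacy result of Bj\"orklund et al.~\cite{Bjorkl04}. The one slightly loose step---deriving a ``contradiction with optimality'' from an improvement against a single fixed Player-2 strategy---is present at the same level of informality in the paper's own proof, so your write-up matches it in both structure and rigor.
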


It follows from Theorem~\ref{thm:optimal} that the decision whether
there is a finite-memory optimal strategy for Player~1 is in NP.  The
NP procedure goes as follows: we guess the value $\vinit$ of state $\init$
and verify that the value at $\init$ is no more than $\vinit$.  We can
decide in coNP whether the value at a state is at least $v$, for $v
\in \mathbb{Q}$.  Thus, we can decide in NP whether the value at state
$\init$ is no more than $\vinit$ (as it is the complement).  Then, we guess
a memoryless optimal strategy for Player~1 and verify (in polynomial
time) that the value is at least~$\vinit$ given the strategy.


\section{Quantitative Verification and Synthesis}
\label{sec:verification-synthesis}

We are interested in the verification and the synthesis problem for
quantitative specifications given by a lexicographic mean-payoff
(parity) automaton.  In the following simple lemma we establish
that these automata also suffice to express qualitative properties.

\begin{lemma}
  Let $A = \tuple{G,\priority}$ be a deterministic parity automaton
  and let $A' = \tuple{G', \lreward}$ be a lexicographic mean-payoff
  automaton. We can construct a lexicographic mean-payoff parity automaton $A
  \times A' = \tuple{G \times G', \lreward, \priority}$, where $G
  \times G'$ is the product graph of $G$ and $G'$ 
  such that for any word $w$ and associated run $\pat$, $\LMP_{A \times
    A'}(\pat) = \minimum$ if the run of $w$ is lost in $A$, and
  $\LM_{A'}(\pat')$ otherwise, where $\pat'$ is the projection of
  $\pat$ on $G'$. 
\end{lemma}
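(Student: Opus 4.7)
The plan is to prove this by an explicit synchronous product construction and a straightforward verification that runs and values project correctly. Let me sketch the steps.

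First, I would define the product automaton formally. Let $G = \tuple{S, \init, E}$ and $G' = \tuple{S', \init', E'}$. Take the state set to be $S \times S'$ with initial state $\tuple{\init, \init'}$, and put an edge $\tuple{\tuple{s,s'}, \sigma, \tuple{t,t'}}$ in the product whenever $\tuple{s, \sigma, t} \in E$ and $\tuple{s', \sigma, t'} \in E'$. Since both $A$ and $A'$ are deterministic and complete, the product is also deterministic and complete. Lift the priority function by setting $\priority(\tuple{s,s'}) = \priority(s)$, and lift the reward function by setting $\lreward(\tuple{\tuple{s,s'}, \sigma, \tuple{t,t'}}) = \lreward(\tuple{s',\sigma,t'})$.

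Next I would establish the projection property. For any word $w \in \A^\omega$, determinism guarantees a unique run $\rho$ in $A$, a unique run $\rho'$ in $A'$, and a unique run $\pat$ in $A \times A'$. A direct induction on the length of the prefix shows that the $i$th state of $\pat$ is exactly $\tuple{\rho_i, \rho'_i}$, i.e.\ $\pat$ projects componentwise to $\rho$ and $\rho'$.

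Finally, I would verify the value equation. For the parity component, the set of priorities occurring infinitely often in $\pat$ equals the set of priorities occurring infinitely often in $\rho$, because priorities in the product are lifted unchanged from the first coordinate. Hence $P_\priority(\pat) = 1$ iff the run of $w$ is winning (i.e., not lost) in $A$. For the mean-payoff component, since each edge of $\pat$ carries exactly the reward of the corresponding edge of $\rho'$, we have $\LM_\lreward(\pat) = \LM_{\lreward}(\rho')$. Plugging into the definition $\LMP_{\lreward,\priority}(\pat) = \LM_\lreward(\pat)$ if $P_\priority(\pat)=1$ and $\minimum$ otherwise yields the two cases stated in the lemma.

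There is no real obstacle here; the only points that need care are (i) checking that the product is well-defined (which is immediate from input enabledness and determinism of $A$ and $A'$) and (ii) being explicit that lifting the priority from the first coordinate is sufficient for the parity condition to coincide with acceptance in $A$. Everything else is a routine unfolding of definitions.
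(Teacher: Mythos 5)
Your construction and verification are correct and coincide with what the paper intends: the paper states this lemma without proof (illustrating it only via the example in Fig.~\ref{fig:lemma1}), and the implicit argument is exactly your synchronous product with the priority lifted from the first component and the reward from the second. Your proposal simply makes the routine unfolding explicit.
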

Note that $\pair{\lang{A},\lang{A'}} = \lang{A \times A'} + \vec{1}$,
assuming that $\inf_{w\in\A^\omega} L_{A'}(w) =0$.  If $A$ is a safety automaton, the
language $\pair{\lang{A},\lang{A'}}$ can be presented by a
lexicographic mean-payoff automaton (see
Example~\ref{ex:game-strategy-mealy}).
Thus, lexicographic mean-payoff automata
suffice to express both a quantitative aspect and a safety aspect of a
specification.  Lexicographic mean-payoff parity automata can be used
to introduce a quantitative aspect to liveness specifications and thus
to the usual linear-time temporal logics.

\begin{Example}\label{ex:game-strategy-mealy}
  Let us resume Example~\ref{ex:intro}.  Fig.~\ref{fig:lemma1} shows a
  safety automaton $B$ for the specification $\always(r \rightarrow \g
  \vee \nextt \g)$.  It also shows the mean-payoff automaton $C$ for
  $\pair{\lang{B},\lang{A_2}}$.  (See Fig.~\ref{fig:unnecessary} for
  the definition of $A_2$.)
    \begin{figure}[tb]
      \centering
       \def\rechts{115}
\def\oben{35}
\def\links{-10}
\def\unten{3}

\begin{picture}(\rechts,\oben)(\links,\unten)
{\fsize
\put(-10,20){$B$}
\gasset{loopdiam=\loop,Nw=\Nw,Nh=\Nh,Nmr=\Nmr}

\node[Nmarks=i](q0)(0,15){$q'_0$}
\node[Nmarks=n](q1)(20,15){$q'_1$}
\node[Nmarks=n](q2)(40,15){$q'_2$}

\drawloop[ELside=l, ELdist=1](q0){$\begin{array}{c}\nr(1)\\ \g(1)\end{array}$}
\drawloop[ELside=l, ELdist=1](q2){$\top(0)$}

\drawedge[ELpos=50, ELside=l, ELdist=1, curvedepth=6](q0,q1){$\r\ng(1)$}
\drawedge[ELpos=50, ELside=l, ELdist=1, curvedepth=6](q1,q0){$\g(1)$}
\drawedge(q1,q2){$\ng(0)$}

\put(60,20){$C$}
\node[Nmarks=i](q10)(70,15){}
\node[Nmarks=n](q11)(90,15){}
\node[Nmarks=n](q12)(110,15){}

\drawloop[ELside=l, ELdist=1](q10){
  $\begin{array}{c}\nr\ng(2)\\ \r\g(2)\end{array}$}
\drawloop[ELside=l, ELdist=1, loopangle=270](q10){$\nr\g(1)$}
\drawloop[ELside=l, ELdist=1](q12){$\top(0)$}

\drawedge[ELpos=50, ELside=l, ELdist=1, curvedepth=6](q10,q11){$\r\ng(2)$}
\drawedge[ELpos=50, ELside=l, ELdist=1, curvedepth=6](q11,q10){$\g(2)$}
\drawedge(q11,q12){$\ng(0)$}

}

\end{picture}

      \caption{Safety automaton $B$ for  $\always(\req \implies
        \grant\vee \nextt\grant)$ and automaton $C$ for $\pair{\lang{B},\lang{A_2}}$.}
    \label{fig:lemma1}
  \end{figure}
\end{Example}

\subsection{Quantitative Verification}
\label{sec:verification}

We now consider the verification problem for quantitative
specifications.  For qualitative specifications, the verification
problem is whether an implementation satisfies the specification for
all inputs.  For quantitative specifications, the problem generalizes
to the question if an implementation can achieve a given value
independent of the inputs.

Let $A = \tuple{\tuple{S, \init, E}, \lreward, \priority}$ be a
lexicographic mean-payoff parity automaton and let $M = \tuple{Q, q_0,
  \delta}$ be a Mealy machine.  The \emph{quantitative verification
  problem} is to determine $\lang{A}(M)$.  The
corresponding decision problem is whether $\lang{A}(M) \geq c$ for a
given cutoff value $c$.  Clearly, verification of qualitative
languages is a special case in which the cutoff value is $1$.

\begin{theorem}
  \label{thm:verify-ms}
  The value $\lang{A}(M)$ can be computed in time
  $O(|S| \cdot |Q| \cdot |E| \cdot |\delta| \cdot d \cdot \lg(|Q| \cdot
  |\delta| \cdot |\lreward|) )$.
\end{theorem}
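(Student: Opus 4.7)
The plan is to reduce the verification problem to a one-player lexicographic mean-payoff parity game on the product of $A$ and $M$, and then invoke the algorithms developed in Section~\ref{sec:games}.

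First, I build the product game $\game_{A \times M}$ on state space $S \times Q$. From a state $(s,q)$, the environment chooses an input $i \in \IA$; since $M$ is deterministic and input-enabled, this uniquely determines an output $o$ and successor $q'$ with $\tuple{q,i,o,q'}\in\delta$, and since $A$ is deterministic there is a unique $s'$ with $\tuple{s,i\cup o,s'}\in E$. The induced product edge carries the reward $\lreward(\tuple{s,i\cup o,s'})$ and inherits priority $\priority(s)$. This game has $|S|\cdot|Q|$ states and at most $|E|\cdot|\delta|$ edges (one per compatible pair of an $A$-edge and a $\delta$-transition).

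Second, because both $A$ and $M$ are deterministic, runs of $A$ on words in $\lang{M}$ correspond bijectively to infinite plays of $\game_{A\times M}$ starting at $\tuple{\init,q_0}$. Hence
\[
  \lang{A}(M) \;=\; \inf_{w\in\lang{M}} \lang{A}(w) \;=\; \inf_{\pat}\LMP_{\lreward,\priority}(\pat),
\]
so the value we want is the value of the one-player LMP-parity game where the environment (the minimizer) controls every state. Memoryless strategies suffice here, as a direct specialization of Theorem~\ref{thm:lex-mean-payoff-parity-games}.

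Third, I apply the reduction from Section~\ref{sec:lex-mean-payoff-games} to collapse the $\dim$-dimensional reward $\lreward$ into a single scalar reward $r^*$ on the product that preserves the lexicographic order on simple cycles. Repeated application across the $\dim$ components yields a scalar reward with $\lg |r^*| = O(\dim\cdot \lg(|Q|\cdot|\delta|\cdot|\lreward|))$, turning the problem into a one-player single-dimensional mean-payoff parity problem on a graph of size $O(|S|\cdot|Q|)$ with at most $|E|\cdot|\delta|$ edges.

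Fourth, I solve this scalar one-player MP-parity problem by SCC decomposition combined with a Karp-style minimum-mean-cycle search restricted to cycles compatible with the parity objective. One-player minimization is substantially easier than the two-player case, and runs in time $O(N\cdot M'\cdot \lg W)$ with $N=|S|\cdot|Q|$, $M'\le|E|\cdot|\delta|$, and $\lg W = O(\dim\cdot\lg(|Q|\cdot|\delta|\cdot|\lreward|))$, which multiplies out to the claimed bound. The main obstacle is step three: showing that the lex-to-scalar weight blow-up stays only linear in $\dim$ on logarithmic scale when instantiated on the product graph, so that the factor $\dim\cdot\lg(|Q|\cdot|\delta|\cdot|\lreward|)$ and not $\dim^2$ or a worse blow-up ends up in the final complexity.
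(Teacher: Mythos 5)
Your proposal follows essentially the same route as the paper: form the product of the automaton with $M$, collapse the lexicographic reward to a scalar via the reduction of Section~\ref{sec:lex-mean-payoff-games}, and then extract the value by cycle analysis with a Karp-style minimum-mean-cycle computation, with the $d\cdot\lg(\cdot)$ factor accounting for arithmetic on the blown-up scalar weights. (The paper reduces $A$ to a scalar automaton $A'$ before taking the product with $M$ rather than after, but that is immaterial.)

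The one place where your description deviates, and where it is actually weaker than the paper's, is the last step. You propose a ``minimum-mean-cycle search restricted to cycles compatible with the parity objective.'' That is not the right computation: if the product contains a reachable cycle whose minimal priority is odd, the adversarial input can force the run into that cycle, so $\lang{A}(M)=\minimum$ regardless of what the parity-compatible cycles look like; restricting the search to good cycles would then return a wrong (too optimistic) value. Moreover, minimizing the mean over only parity-good cycles is not what Karp's algorithm computes. The paper instead splits the analysis into two phases: first test whether \emph{any} reachable cycle violates the parity condition (returning $\minimum$ if so), and only in the remaining case --- where every reachable cycle is good --- run the unrestricted Karp search. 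With that correction your argument goes through and matches the stated bound.
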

\begin{proof}
  We reduce the lexicographic mean-payoff parity automata to a
  mean-payoff parity automaton $A'$ using the reduction stated in
  Section~\ref{sec:lex-mean-payoff-games} and build the product
  automaton of $A'$ and $M$.  Then, we check if it contains a cycle
  that is not accepted by the parity algorithm \cite{King01}.  If so,
  we return $\minimum$.  If not, in the second step we find the
  minimal mean-weight cycle \cite{Karp78}.
\end{proof}

\begin{Example}
  In Example~\ref{ex:intro}, we computed the values of
  Implementations~$\M_1$, $\M_2$, and $\M_3$ (Fig.~\ref{fig:systems})
  for the specifications $A_1$ and $A_2$ given in
  Fig.~\ref{fig:unnecessary}.  Specification $A_1$ requires the number
  of grants to be minimal.  Under this specification, $\M_3$ is
  preferable to both other implementations because it only produces
  half as much grants in the worst case.  Unfortunately, $A_1$ treats
  a grant the same way regardless of whether a request occurred.
  Thus, this specification does not distinguish between $\M_1$
  and~$\M_2$.  Specification $A_2$ only punishes ``unnecessary''
  grants, which means that $A_2$ prefers $\M_2$ and $\M_3$ to $\M_1$.

  A preference between the eagerness of $\M_2$ and the laziness of
  $\M_3$ can be resolved in either direction.  For instance, if we
  combine the two quantitative languages using addition, lazy
  implementations are preferred.
\end{Example}

\subsection{Quantitative Synthesis}
\label{sec:synthesis}
In this section, we show how to automatically construct an
implementation from a quantitative specification given by a
lexicographic mean-payoff (parity) automaton.  First, we show the
connection between automata and games, and between strategies and
Mealy machines, so that we can use the theory from
Sections~\ref{sec:games} to perform synthesis.  Then, we define
different notions of synthesis and give their complexity bounds.

We will show the polynomial conversions of an automata to a game and
of a strategy to a Mealy machines using an example.
\begin{Example}
  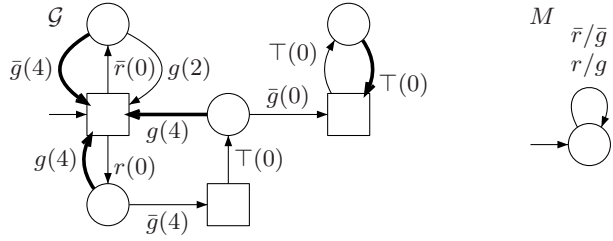
\begin{figure}[tbp]
    \centering
  \def\rechts{95}
\def\oben{35}
\def\links{-6}
\def\unten{0}

\begin{picture}(\rechts,\oben)(\links,\unten)
{\fsize
  \gasset{loopdiam=\loop,Nw=\Nw,Nh=\Nh,Nmr=\Nmr}
  
  \put(0,30){$\mathcal{G}$}
  \node[Nmarks=i,Nmr=0](q0)(10,15){}
  \node(q1)(10,0){}
  \node(q2)(10,30){}
  \node[Nmr=0](q3)(30,0){}
  \node(q4)(30,15){}
  \node(q5)(50,30){}
  \node[Nmr=0](q6)(50,15){}

  \drawedge[ELdist=1,ELpos=60](q0,q1){$\r (0)$}
  \drawedge[ELdist=1,curvedepth=4,linewidth=.6,AHlength=20,AHlength=0](q1,q0){$\g (4)$}
  \drawedge[ELdist=1,ELside=r](q0,q2){$\nr (0)$}
  \drawedge[ELdist=1,curvedepth=9](q2,q0){$\g (2)$}
  \drawedge[ELdist=1,curvedepth=-8,ELside=r,linewidth=.6,AHlength=20,AHlength=0](q2,q0){$\ng (4)$}
  \drawedge[ELdist=1,ELside=r](q1,q3){$\ng (4)$}
  \drawedge[ELdist=1,ELside=r](q3,q4){$\top (0)$}
  \drawedge[ELdist=1,linewidth=.6,AHlength=20,AHlength=0](q4,q0){$\g (4)$}
  \drawedge[ELdist=1](q4,q6){$\ng (0)$}
  \drawedge[ELdist=1,curvedepth=4,ELpos=60](q6,q5){$\top (0)$}
  \drawedge[ELdist=1,curvedepth=4,ELpos=60,linewidth=.6,AHlength=20,AHlength=0](q5,q6){$\top (0)$}

    \put(80,30){$M$}
  \node[Nmarks=i](qM)(90,10){}
  \drawloop(qM){$\begin{array}{c}\nr/\ng\\ r/g\end{array}$}
}  
\end{picture}

    \caption{A game (optimal strategy shown in bold) and corresponding Mealy machine}
    \label{fig:game-strategy-mealy}
  \end{figure}
  
  Fig.~\ref{fig:game-strategy-mealy}(left) shows the game
  $\mathcal{G}$ corresponding to the automaton~$C$ shown in
  Fig.~\ref{fig:lemma1}.  Note: The alphabet $2^{\AP}$ has been split
  into an input alphabet~$2^{\I}$ controlled by Player~2 (squares) and
  an output alphabet $2^{\O}$ controlled by Player~1 (circles).
  Accordingly, each edge $e$ of $C$ is split into two edges $e_2$ and
  $e_1$; the reward of $e_2$ is zero and the reward of $e_2$ is double
  the reward of $e$.  It should be clear that with the appropriate
  mapping between runs, the payoff remains the same.  Because we want
  a Mealy machine, the input player makes the first move.

  The figure also shows an optimal strategy (bold edges) for
  $\mathcal{G}$ with payoff $2$.  The right side of the figure shows
  the Mealy machine $M$ corresponding to the strategy.  It is
  constructed by a straightforward collection of inputs and chosen
  outputs.  It is easily verified that $\quan_{C}(M) = 2$.
\end{Example}

\begin{definition}
  Let $\quan$ be a quantitative language and let $\vc \in \real^d$ be
  a cutoff value.  We say that $\quan$ is \emph{$\vc$-realizable} if
  there is a Mealy machine $M$ such that $\quan(M) \geq \vc$.  We say
  that $\quan$ is \emph{limit-$\vc$-realizable} if for all
  $\vec{\epsilon}>0$ there is a Mealy machine $M$ such that $\quan(M) +
  \vec{\epsilon} \geq \vc$.

  Suppose the supremum of $\quan(M)$ over all Mealy machines $M$ exists,
  and denote it by $\vc^*$.  We call $\quan$ \emph{realizable
  (limit-realizable)} if $\quan$ is $\vc^*$-realizable
  (limit-$\vc^*$-realizable).  A Mealy machine~$M$ with value $L(M) \ge
  \vc^*$ ($\quan(M) + \vec{\epsilon} \geq \vc^*$) is called
  \emph{optimal} (\emph{$\vec{\epsilon}$-optimal}, resp.).

\end{definition}

\vspace{-1mm} Clearly, realizability implies limit-realizability.
Note that by the definition of supremum, $\quan$ is
limit-$\vc^*$-realizable iff $\vc^*$ is defined.  Note also that
realizability for qualitative languages corresponds to realizability
with cutoff 1.  Synthesis is the process of constructing an optimal
($\vec{\epsilon}$-optimal) Mealy machine.
Note that for a cutoff value $\vc$, if $L$ is $\vc$-realizable, then
we have that $L(M)\ge\vc$ for any optimal Mealy machine $M$.  If $L$
is limit-$\vc$-realizable, then $L(M_{\epsilon})+\vec{\epsilon}\ge\vc$
holds for any $\vec{\epsilon}$-optimal Mealy machine $M_{\epsilon}$.




\begin{Example}
  We have already seen an example of a realizable specification
  expressed as a mean-payoff automaton (See
  Figs.~\ref{fig:unnecessary} and~\ref{fig:lemma1} and
  Example~\ref{ex:game-strategy-mealy}.)
  Example~\ref{ex:limitrealizable} shows a language that is only
  limit-realizable.
\end{Example}



\noindent 
For the combination of safety and quantitative specifications, we
have Theorem~\ref{thm:safety}.
\begin{theorem}
  \label{thm:safety}
  Let $A = \tuple{\tuple{S, \init, E}, \lreward}$ be a lexicographic
  mean-payoff automaton of dimension $d$, and let $\vc$ be a cutoff
  value.  The following assertions hold.
  \vspace{-5pt}
  \begin{enumerate}
  \item $\quan_A$ is realizable (hence limit-realizable); 
  $\quan_A$ is $\vc$-realizable iff $\quan_A$ is limit-$\vc$-realizable. 
  \item $\vc$-realizability (and by~(1) limit-$\vc$-realizability) of
    $\quan_A$ are decidable in NP $\cap$ coNP. 
  \item An optimal Mealy machine can be constructed in time
    $O(|E|^{4d+6}\cdot|\lreward|)$. 
  \end{enumerate}
\end{theorem}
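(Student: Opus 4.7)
The plan is to reduce quantitative synthesis for $A$ to solving a lexicographic mean-payoff game on a polynomially larger graph, and then to invoke Theorem~\ref{thm:lex-mean-payoff-games}. First I would build a two-player game graph $\mathcal{G}_A$ from $A$ by splitting each edge into a Player-2 move (choosing the input) followed by a Player-1 move (choosing the output), exactly as illustrated in Fig.~\ref{fig:game-strategy-mealy}. At a Player-2 state corresponding to an automaton state $s$, Player~2 picks $i\in\IA$ and moves to an intermediate Player-1 state; there Player~1 picks a compatible $o\in\OA$, moving to the successor dictated by the transition of $A$ on $i\cup o$. I would assign reward $\vec{0}$ to Player-2 edges and double $\lreward$ on Player-1 edges so that the long-run average along a play equals $\quan_A$ on the corresponding I/O word. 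The resulting $\mathcal{G}_A$ has $O(|E|)$ states and $O(|E|)$ edges, and there is a bijection between Mealy machines implementing any I/O language and Player-1 strategies in $\mathcal{G}_A$.

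For part~(1), I would invoke Theorem~\ref{thm:lex-mean-payoff-games}(2) to get a memoryless optimal strategy $\strat_1^*$ for Player~1 from the initial state. I would then read off a Mealy machine $M^*$ whose states are (essentially) the Player-2 states of $\mathcal{G}_A$ and whose transitions are determined by the input choice followed by the output prescribed by $\strat_1^*$. By determinacy (Theorem~\ref{thm:lex-mean-payoff-games}(1)) and the bijection above, $\quan_A(M^*)=\Value_1(\LM,\init)=\sup_M \quan_A(M)=\vc^*$, so $\quan_A$ is realizable. The equivalence between $\vc$-realizability and limit-$\vc$-realizability is then immediate: if $\quan_A$ is limit-$\vc$-realizable then for every $\vec{\epsilon}>\vec{0}$ some Mealy machine witnesses $\vc^*\ge \vc-\vec{\epsilon}$, so $\vc^*\ge \vc$ and $M^*$ witnesses $\vc$-realizability; the other direction is trivial.

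For part~(2), $\vc$-realizability of $\quan_A$ is equivalent to $\Value_1(\LM,\init)\ge \vc$ in $\mathcal{G}_A$, which is decidable in NP $\cap$ coNP by Theorem~\ref{thm:lex-mean-payoff-games}(3). For part~(3), I would apply the algorithm of Theorem~\ref{thm:lex-mean-payoff-games}(4) to $\mathcal{G}_A$, giving a time bound polynomial in $|E|$, $|\lreward|$, and exponential only in $d$; the stated $O(|E|^{4d+6}\cdot|\lreward|)$ bound follows after substituting $|S_{\mathcal{G}_A}|,|E_{\mathcal{G}_A}|=O(|E|)$ and accounting for the reduction step that embeds the lexicographic rewards into a single scaled mean-payoff game (Section~\ref{sec:lex-mean-payoff-games}). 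Converting $\strat_1^*$ into the Mealy machine $M^*$ is a linear-time post-processing.

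The only mildly delicate point is ensuring that the supremum $\vc^*$ is actually attained, which is what makes realizability (rather than merely limit-realizability) hold in the safety case; this is precisely the content of memoryless optimality in Theorem~\ref{thm:lex-mean-payoff-games}(2), so there is no real obstacle. The rest is careful bookkeeping: verifying that the edge-splitting preserves both the language and the payoff, checking input-enabledness of $M^*$ (which follows because $\strat_1^*$ is defined at every Player-1 state of $\mathcal{G}_A$, one for each input at each automaton state), and tracking the polynomial blow-up from automaton to game to strategy to Mealy machine.
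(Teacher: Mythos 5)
Your proposal is correct and follows essentially the same route as the paper: the paper also reduces the automaton to a two-player game by splitting edges into input (Player-2) and output (Player-1) moves as in Fig.~\ref{fig:game-strategy-mealy}, and then derives all three assertions from the memoryless determinacy, complexity, and algorithmic bounds of Theorem~\ref{thm:lex-mean-payoff-games}. Your version merely spells out the bookkeeping (value correspondence, strategy-to-Mealy-machine conversion, and the substitution of game size for automaton size in the running time) that the paper leaves implicit.
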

The first results follow from the existence of memoryless
optimal strategies for lexicographic mean-payoff games.  The second and 
third results follows from the complexity and algorithms of solving these games.  
(See Theorem~\ref{thm:lex-mean-payoff-games}.)  
For liveness, we have the following result.
\begin{theorem}
  Let $A = \tuple{\tuple{S, \init, E}, \lreward, \priority}$ be a
  lexicographic mean-payoff parity automaton of dimension $d$ and let
  $\vc$ be a cutoff value.  The following
  assertions hold.
    \vspace{-5pt}
  \begin{enumerate}
  \item $\quan_A$ is limit-realizable, but it may not be realizable; 
  limit-$\vc$-realizability of $\quan_A$ does not imply $\vc$-realizability.
  \item Realizability and $\vc$-realizability of $\quan_A$ are decidable in NP, and
  limit-$\vc$-realizability of $\quan_A$ is decidable in coNP.
  \item For $\vec{\epsilon}>0$, an $\vec{\epsilon}$-optimal Mealy machine can be constructed in time
  $O(|S|^{|p|} \cdot |E|^{4d +6} \cdot |\lreward| + \frac{1}{\epsilon})$. 
  If $\quan_A$ is realizable, then an optimal Mealy machine can be
  constructed in time $O(|S|^{|p|} \cdot |E|^{4d +6} \cdot |\lreward|)$. 
   \end{enumerate}
\end{theorem}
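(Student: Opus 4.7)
The plan is to reduce the synthesis problem for a lexicographic mean-payoff parity automaton $A$ to a lexicographic mean-payoff parity game $\game_A$ built by splitting each transition of $A$ into an input move (controlled by Player~2) followed by an output move (controlled by Player~1), exactly as in Example~\ref{ex:game-strategy-mealy}. The rewards are preserved on the output-edge (doubled to compensate for the zero-reward input-edge), and priorities lift to states of the split graph. Under this translation, strategies of Player~1 correspond to (possibly infinite-memory) implementations, and a finite-memory strategy yields a Mealy machine whose value under $\quan_A$ equals the value of the strategy in $\game_A$. Thus every part of the theorem will be read off from the corresponding result on lexicographic mean-payoff parity games in Section~\ref{sec:lex-mean-payoff-parity-games}.

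For part~(1), limit-realizability follows from Theorem~\ref{thm:lex-mean-payoff-parity-epsilon}: for every $\vec{\ve}>0$ there is a finite-memory $\vec{\ve}$-optimal Player-1 strategy in $\game_A$, which translates to a Mealy machine $M$ with $\quan_A(M)+\vec{\ve}\ge\vc^*$, where $\vc^*$ is the Player-1 value of $\game_A$ at the initial state. Non-realizability (and the fact that limit-$\vc$-realizability does not imply $\vc$-realizability) is witnessed by Example~\ref{ex:limitrealizable}: the supremum reward $1$ for $\always(r\to\eventually g)$ combined with $A_1$ is approached but never attained by a finite-state implementation, since attaining it requires answering requests only at positions $2^k$.

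For parts~(2) and~(3), the algorithmic upper bounds come from viewing synthesis as solving $\game_A$. Plain $\vc$-realizability asks whether Player~1 has an optimal strategy whose value is at least $\vc$; by Theorem~\ref{thm:optimal}, if such a finite-memory strategy exists then a memoryless one does, so the NP procedure guesses a memoryless Player-1 strategy and verifies in NP$\cap$coNP (using Theorem~\ref{thm:lex-mean-payoff-games} applied to the Player-2 residual game) that its value meets $\vc$. Realizability itself (asking whether the supremum is attained) uses the same NP certificate: guess a memoryless strategy, compute its value, and additionally check (in coNP, hence inside NP overall by the complement bound from Theorem~\ref{thm:lex-mean-payoff-parity-games}(3)) that the game value does not exceed it. Limit-$\vc$-realizability reduces directly to ``value $\ge \vc$ in $\game_A$,'' which is in coNP by Theorem~\ref{thm:lex-mean-payoff-parity-games}(3). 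The constructive bounds in~(3) come from first applying the algorithm of Theorem~\ref{thm:lex-mean-payoff-parity-games}(4) to $\game_A$ and then either extracting the memoryless optimal strategy (optimal case) or instantiating the $\vec{\ve}$-optimal finite-memory strategy of Theorem~\ref{thm:lex-mean-payoff-parity-epsilon} with the bound on the first-phase length that depends on $\frac{1}{\ve}$; the Mealy machine is obtained by the straightforward conversion illustrated in Example~\ref{ex:game-strategy-mealy}.

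The main obstacle is the subtle handling of realizability versus limit-realizability in the liveness setting: memoryless optimal strategies need not exist in $\game_A$, so the usual ``guess-a-memoryless-strategy'' NP certificate has to be justified via Theorem~\ref{thm:optimal}, which tells us that a finite-memory optimal strategy can exist only when a memoryless one does. Getting the complexity of realizability into NP (rather than just the weaker limit-$\vc$-realizability into coNP) hinges on this equivalence, and also requires care that guessing the strategy in the split game $\game_A$ remains polynomial-sized in the input automaton $A$.
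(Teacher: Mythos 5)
Your proposal is correct and follows essentially the same route as the paper: the automaton-to-game translation of Example~\ref{ex:game-strategy-mealy}, Theorem~\ref{thm:lex-mean-payoff-parity-epsilon} for limit-realizability and the $\vec{\epsilon}$-optimal construction, Example~\ref{ex:limitrealizable} as the non-realizability witness, Theorem~\ref{thm:optimal} for the NP bounds on realizability and $\vc$-realizability, and Theorem~\ref{thm:lex-mean-payoff-parity-games} for the coNP bound and the algorithmic running times. The paper's own justification is only a brief explanation citing these same results, so your more detailed account fills in exactly the intended argument.
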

Explanation: Following Theorem~\ref{thm:optimal}, realizability and
$\vc$-realizability can be computed in NP.
We have that $\quan_A$ is limit-$\vc$-realizable iff $\vc$ is not
higher than the value of the initial state, which can be decided in
coNP.  (Theorem~\ref{thm:lex-mean-payoff-parity-games}.)
Limit-realizability follows from
Theorem~\ref{thm:lex-mean-payoff-parity-epsilon}.

\begin{example}
  In Example~\ref{ex:limitrealizable} we discussed the specification
  $\phi = \always(\r \rightarrow \eventually \g)$.  In combination
  with the quantitative language given by $A_3$ in
  Fig.~\ref{fig:distance}, this specification is optimally realizable
  by a finite implementation: implementations $M_1$ and $M_2$ from
  Fig.~\ref{fig:systems} are two examples.  The combination of $\phi$
  and the quantitative language given by $A_1$ in
  Fig.~\ref{fig:unnecessary} only yields a specification that is
  optimally limit-realizable.  Automaton $A_1$ prefers as few as
  possible requests.  An implementation that is optimal within $1/k$
  could simply give a request every $k$ cycles.  
  It may not be useful to require that something
  happens as infrequently as possible in the context of liveness
  specifications.  Instead, more subtle approaches are necessary; in this case
  we could require that \emph{unnecessary} grants occur as little as
  possible. (Cf.\ $A_2$ in Fig.~\ref{fig:unnecessary}.)
\end{example}

\section{Conclusions and Future Work}
\label{sec:conclusions}

We introduced a measure for the ``goodness'' of an implementation by
adding quantitative objectives to a qualitative specification.  Our
quantitative objectives are mean-payoff objectives, which
are combined lexicographically.  
Mean-payoff objectives are relatively standard and, as we
demonstrated, sufficiently expressive for our purposes.
Other choices, such as discounted
objectives~\cite{DiscountingTheFuture}, are possible as well.
These give rise to different expressive powers for specification 
languages~\cite{Chatte08a}.

Finally, we have taken the worst-case view that the quantitative value
of an implementation is the worst reward of all runs that the
implementation may produce.  There are several alternatives.  For
instance, one could take the average-case view of assigning to an
implementation some expected value of the cost taken over all possible
runs, perhaps relative to a given input distribution.  Another option
may be to compute admissible strategies.  It can be shown that such
strategies do not exist for all mean-payoff games, but they may exist
for an interesting subset of these games.

\smallskip\noindent{\bf Acknowledgements.} 
We thank V{\'e}ronique Bruy{\`e}re, No{\'e}mie Meunier, and Jean-Fran\c{c}ois Raskin
for insightful comments related related to complexity of strategies in 
lexicographic mean-payoff games.
We also thank  No{\'e}mie Meunier for suggestions of simplification of some
calculations of Lemma~3 using that rewards are non-negative.

\bibliographystyle{plain}
\bibliography{main} 

\end{document}